\def\algbackskip{\hskip-\ALG@thistlm}
\newcommand*{\addFileDependency}[1]{
  \typeout{(#1)}
  \@addtofilelist{#1}
  \IfFileExists{#1}{}{\typeout{No file #1.}}
}
\newfont{\bg}{cmr10 scaled\magstep2}
\newcommandx{\norm}[2][2=]{\| #1 \|_{#2}}
\newcommandx{\nint}[2]{\left\{ #1, \ldots, #2 \right\}}
\newcommandx{\maxi}[2][2=]{\underset{#2}{\operatorname{max}}\left\{#1\right\}}
\newcommandx{\mini}[2][2=]{\underset{#2}{\operatorname{min}}\left\{#1\right\}}
\newcommandx{\argmaxi}[2][2=]{\underset{#2}{\operatorname{argmax}}\left\{#1\right\}}
\newcommandx{\argmini}[2][2=]{\underset{#2}{\operatorname{argmin}}\left\{#1\right\}}
\newcommand{\xupdownarrow}[1]{%
  {\left\updownarrow\vbox to #1{}\right.\kern-\nulldelimiterspace}
}
\newcommandx{\rk}[1]{\operatorname{rk}(#1)}
\newcommandx{\prob}[1]{\mathbb{P}\left\{#1 \right\} }
\newcommandx{\e}{\mathrm{e}}
\newtheorem{theorem}{Theorem}
\newtheorem{assumption}{\textbf{H}}
\Crefname{assumption}{\textbf{H}}{\textbf{H}}
\crefname{assumption}{\textbf{H}}{\textbf{H}}
\newtheorem{lemma}{Lemma}
\Crefname{lemma}{Lemma}{Lemma}
\crefname{lemma}{lemma}{lemma}
\Crefname{corollary}{Corollary}{Corollary}
\crefname{corollary}{corollary}{corollary}
\Crefname{remark}{Remark}{Remark}
\crefname{remark}{remark}{remark}
\newtheorem{proposition}{Proposition}
\Crefname{proposition}{Proposition}{Proposition}
\crefname{proposition}{proposition}{proposition}
\Crefname{definition}{Definition}{Definition}
\crefname{definition}{definition}{definition}
\Crefname{example}{Example}{Example}
\crefname{example}{example}{example}
\newcommand{\intervi}[1]{[#1]}
\def\eqsp{\,}
\newcommand{\ensemble}[2]{\left\{#1\,:\eqsp #2\right\}}
\newcommand{\set}[2]{\ensemble{#1}{#2}}
\def\rank{\operatorname{rank}}
\def\nset{\mathbb{N}}
\begin{document}

\title{Low-rank model with covariates for count data with missing values}

  \author[1,2]{Genevi\`{e}ve Robin}
  \author[1,2]{Julie Josse}
\author[1,2]{\'{E}ric Moulines}  
\author[3]{Sylvain Sardy}
\affil[1]{CMAP, UMR 7641, \'{E}cole Polytechnique }
\affil[2]{XPOP, INRIA }
\affil[3]{Department of Mathematics, Universit\'{e} de Gen\`{e}ve}
\maketitle

\begin{abstract}
Count data are collected in many scientific and engineering tasks including image processing, single-cell RNA sequencing  and ecological studies. Such data sets often contain missing values, for example because some ecological sites cannot be reached in a certain year. In addition, in many instances, side information is also available, for example covariates about ecological sites or species. Low-rank methods are popular to denoise and impute count data, and benefit from a substantial theoretical background. Extensions accounting for covariates have been proposed, but to the best of our knowledge their theoretical and empirical properties have not been thoroughly studied, and few softwares are available for practitioners.\\
We propose a complete methodology called LORI (Low-Rank Interaction), including a Poisson model, an algorithm, and automatic selection of the regularization parameter, to analyze count tables with covariates. We also derive an upper bound on the estimation error.
We provide a simulation study with synthetic data, revealing empirically that LORI improves on state of the art methods in terms of estimation and imputation of the missing values. We illustrate how the method can be interpreted through visual displays with the analysis of a well-know plant abundance data set, and show that the LORI outputs are consistent with known results. Finally we demonstrate the relevance of the methodology by analyzing a water-birds abundance table from the French national agency for wildlife and hunting management (ONCFS). The method is available in the R package \texttt{lori} on the Comprehensive Archive Network (CRAN).\end{abstract}

\noindent%
{\it Keywords:} Count data; Dimensionality reduction; Ecological data; Imputation; Low-rank matrix recovery; Quantile universal threshold
\vfill


\section{Introduction}
\label{introduction}
Let $Y$ be an $n\times p$ observation matrix of counts, $R\in\mathbb{R}^{n\times K_1}$ and $C\in\mathbb{R}^{p\times K_2}$ be two matrices containing row and column covariates, respectively. In our ecological application in \Cref{eco-data}, rows of the contingency table represent ecological sites, and columns represent years. For $(i,j)\in \{1,\ldots,n\}\times\{1,\ldots,p\}$, $Y_{ij}$ counts the abundance of water-birds measured in site $i$ during the year $j$. The row feature $R_{i\ell}$, $\ell \in \{1,\ldots, K_1\}$ embeds geographical information about the site $i$ (latitude, longitude, distance to coast, etc.) while the column feature $C_{j \ell}$, $\ell \in \{1,\ldots, K_2\}$ codes meteorological characteristics of the year $j$ (precipitation, etc.). In addition, some entries of $Y$ are missing. For example ecological sites are sometimes inaccessible because of meteorological or political conditions, and therefore cannot be counted.\\

Such count tables are often analyzed using \textit{low-rank models} \citep{Greenacre84,Good85, falg98,Christensen90,Gow11c, JosseFithian2016}, imposing a low-rank structure to an underlying parameter matrix. We assume a probabilistic framework with independent entries $Y_{ij}$ following a Poisson model
\begin{equation}
\label{eq:poisson}
Y_{ij}\sim \mathcal{P}(\mathrm{e}^{X^{*}_{ij}}), (i,j)\in \intervi{n}\times\intervi{p},
\end{equation}
and focus on the estimation of $X^{*}$ based on a low-rank assumption.
The \textit{generalized additive main effects and multiplicative interaction} model, or \textit{row-column} model (see, e.g., \citet{Good85, falg98}), assuming
\begin{equation}
\label{eq:saturated-log-linear}
X^*_{ij}=\mu^*+\alpha^*_i+\beta^*_j+\Theta^*_{ij}\text{, \quad }\operatorname{rank}(\Theta^*)\leq \min(n-1, p-1),
\end{equation}
is adapted to do so. In this model, $\mu^*$ is an offset, the terms which only depend on the index of the row or column ($\alpha^*_i$ and $\beta^*_j$) are called \textit{main effects}, and the terms which depend on both (here $\Theta^*_{ij}$) are called \textit{interactions} \citep[Section 4.1.2, p.87]{Kateri2014}. \\

A natural idea to incorporate covariates in this framework, is to express the row and column effects $\alpha^*_i$ and $\beta^*_j$ as regression terms on the covariates. In other words, for $\mu^{*}\in\mathbb{R}$, $\alpha^{*} \in \mathbb{R}^{K_1}$, $\beta^{*}\in\mathbb{R}^{K_2}$ and $\Theta^{*}\in \mathbb{R}^{n\times p}$,
\begin{equation}
\label{eq:x-model2}
X^{*}_{ij} = \mu^{*} + \underbrace{\sum_{k=1}^{K_1}R_{ik}\alpha^*_k}_{\text{row effect}} + \underbrace{\sum_{l=1}^{K_2}C_{il}\beta_l^*}_{\text{column effect}}  + \Theta^{*}_{ij}, \quad \operatorname{rank}(\Theta^*)\leq \min(n - 1, p-1).
\end{equation}
Such an extension is useful in practice for two main reasons. First, estimated covariates coefficients (and in particular their signs) can be used to determine whether the studied covariates have positive or negative effects on the counts; this is particularly useful in ecology, to check whether meteorological, geographical or political conditions favor or endanger species. Second, when the proportion of missing values is large, which is often the case in bird monitoring, incorporating (relevant) covariates can improve the imputation significantly.\\

Models related to \eqref{eq:x-model2} have been considered for statistical ecology applications in \citep{MEE3:MEE312163,10.7717/peerj.2885}. However, to the best of our knowledge, their theoretical and empirical properties have not been thoroughly studied. On the other hand, the literature on convex low-rank matrix estimation is abundant and benefits from a substantial theoretical background, but few software with ready to use solution are available for practitioners, and applications for count data outside image analysis \citep{Luisier2010a,Salmon14,Cao2016} and recommendation systems \citep{Gopalan14bayesiannonparametric} have not been attempted. The scope of this paper is to develop a complete methodology for the inference of model \eqref{eq:x-model2}, bridging the gap between convex low-rank matrix completion and model-based count data analysis.\\

After detailing related work in \Cref{related-work}, we introduce in \Cref{model} a general model which includes \eqref{eq:x-model2}; we propose an estimation procedure through the minimization of a data fitting term
penalized by the nuclear norm of the matrix $\Theta$, which acts as a convex relaxation of the rank constraint. Building up on existing results on nuclear norm regularized loss functions, we derive statistical guarantees in \Cref{guarantee}. In particular, we provide an upper bound for the Frobenius norm of the estimation error.
 In \Cref{lambda-choice}, we propose an optimization algorithm, and two methods to choose the regularization parameter automatically. We provide a simulation study in \Cref{experiments} revealing that LORI outperforms state-of-the-art methods when the proportion of missing values is large and the interactions are of significant order compared to the main effects. 
In \Cref{real-data}, we show on plant abundance data with side information, how the results of our procedure can be interpreted through visual displays. In particular, the arising interpretation is consistent with known results from the original study \citep{aravo}. 
In \Cref{eco-data}, we use LORI to analyze a water-birds abundance data set from the French national agency for wildlife and hunting management (ONCFS). The proofs of the statistical guarantees are postponed to the appendix, and the method is available as an R package \citep{R} called \texttt{lori}  (LOw-Rank Interaction) on the Comprehensive R Archive Network (CRAN) at \url{https://CRAN.R-project.org/package=lori}. 

\subsection{Related work}
\label{related-work}
Model \eqref{eq:x-model2} is closely related to other models previously suggested in the statistical ecology literature to analyze count tables with row and column covariates. For instance,
\citet{MEE3:MEE312163} and  \citet{10.7717/peerj.2885} suggested the following model:
\begin{equation}
\label{eq:saturated-log-linear-cov}
X^*_{ij}=\mu^*+\alpha^*_i+\beta^*_j+\epsilon_{RC}R_iC_j,
\end{equation}
with $R_i$, $1\leq i\leq n$ a row trait and  $C_j$, $1\leq j\leq p$ a column trait. The interaction between covariates is modeled by $\epsilon_{RC}R_iC_j$, where $\epsilon_{RC}$ is an unknown parameter measuring the strength of the interaction between the two traits. The main difference with model \eqref{eq:x-model2} is that we incorporate the covariates in the main effects rather than the interactions, which leads to different interpretations. In terms of estimation properties, the main advantage of \eqref{eq:x-model2} is that, as long as $K_1\leq n$ and $K_2\leq p$, we estimate less parameters. This is an important point for us since in many applications we are interested in (see e.g. \Cref{eco-data}), a large proportion of entries is missing, limiting the amount of available data. Finally, model \eqref{eq:saturated-log-linear-cov} was developed with the aim of testing significant associations between covariates, and its theoretical and empirical estimation properties, as far as we know, were not studied. \\

In the low-rank matrix completion literature, related approaches for count matrix recovery and dimensionality reduction can be embedded within the framework of low-rank exponential family estimation \citep{Collins01ageneralization, deLeeuw2006PCA, li2013simple,  JMLR:v17:14-534, Edgar16} as well as its Bayesian counterpart \citep{Ghahramani2009, Gopalan}. In terms of statistical guarantees, the theoretical performance of nuclear norm penalized estimators for Poisson denoising has been studied in \citet{Cao2016}, where the authors prove uniform bounds on the empirical error risk. Estimation rates are also given in \citet{Lafond2015}, where optimal bounds are proved for matrix completion in the exponential family. These two papers do not account for available covariates. \\

More recently, \citet{2018_AOAS_cmr} developped a probabilistic PCA framework for the exponential family, where covariates can be included in the parameter space. \citet{fithian2013scalable} present a variety of low-rank problems including the generalized nuclear norm penalty \citep{Angst:2011:GTA:2355573.2356388}, that can be used to include row and column covariates.  Similar estimation problems were also considered, e.g., in \citet{Agarwal:2009:RLF:1557019.1557029, Abernethy:2009:NAC:1577069.1577098}. However, to the best of our knowledge, these papers did not provide statistical guarantees and the practical advantages of such extensions compared to classical low-rank methods have not been thoroughly studied.

\section{General model and estimation}
\label{model}
We now introduce a general version of the model described in the previous section.  First, we relax the Poisson model and replace it with the following assumption on the distribution of $Y_{ij}$, $(i,j)\in\intervi{n}\times\intervi{p}$.
\begin{assumption}
\label{ass:boundedness}
The random variables $Y =\{Y_{i,j} \}_{(i,j) \in \intervi{n} \times \intervi{p}}$ are independent and there exist $\gamma > 0$, $\sigma_{-} > 0$ and $\sigma_{+} < \infty$ such that for all $i \in \intervi{n}$ and $j \in \intervi{p}$ $$\mathrm{e}^{-\gamma}\leq \mathbb{E}[Y_{ij}]\leq \mathrm{e}^{\gamma} \text{ and }\sigma_{-}^2\leq\operatorname{var}[Y_{ij}]\leq \sigma_{+}^2.$$
\end{assumption}
We define $X^*_{ij}$ for all $(i,j)\in\intervi{n}\times\intervi{p}$ by
\begin{equation}
\label{eq:x-def}
X^*_{ij} = \operatorname{argmin}_{x\in\mathbb{R}} \{ -\mathbb{E}[Y_{ij}]x + \exp(x) \} \eqsp.
 \end{equation}
In other words, we do not assume that the random variable $Y_{ij}$ follows a Poisson distribution. The  target parameter  $X_{ij}^*$ minimizes the Kullback-Leibler divergence between the distribution of $Y_{ij}$ and a Poisson distribution.
Let us also generalize the decomposition introduced in \eqref{eq:x-model2}. Let $S_1$ and $S_2$ be fixed linear subspaces of $\mathbb{R}^n$ and $\mathbb{R}^p$ respectively. Let $P_1$ and $P_2$ be the orthogonal projection matrices on $S_1$ and $S_2$, $\mathcal{P}^{\perp}: X\in\mathbb{R}^{n\times p} \mapsto P_1XP_2^{\top}$, $\mathcal{P}:X\in\mathbb{R}^{n\times p} \mapsto X - \mathcal{P}^{\perp}(X)$, $\mathcal{X}_0 \subset \{X\in\mathbb{R}^{n\times p};\mathcal{P}^{\perp}(X) =0 \}$ and $\mathcal{T} = \{X\in\mathbb{R}^{n\times p};\mathcal{P}(X) =0 \}$. We denote 
\begin{equation}
\label{eq:rank-r}
r= \max\left( \set{\rank(A)}{A \in \mathcal{X}_0}\right).
\end{equation}
Consider the following decomposition:
\begin{equation}
\label{eq:general-model}
X^* = X_0^* + \Theta^*, \quad X_0^*\in \mathcal{X}_0, \Theta^* \in \mathcal{T}.
\end{equation}
Denote, for $m\geq 1$, $\mathbbm{1}_m$ the vector of ones of length $m$. Model \eqref{eq:x-model2} is included in \eqref{eq:general-model} by setting $S_1 = \{u\in\mathbb{R}^n; \mathbbm{1}_n^{\top}u = 0\}$, $S_2 = \{v\in\mathbb{R}^p; \mathbbm{1}_p^{\top}v = 0\}$, and
\[\mathcal{X}_0 = \left\{ \left(\mu+\sum_{k=1}^{K_1} R_{ik}\alpha_k + \sum_{k=2}^{K_2} C_{ik}\beta_k\right)_{(i,j)\in\intervi{n}\times\intervi{p}}; \mu\in \mathbb{R}, \alpha\in\mathbb{R}^{K_1}, \beta\in\mathbb{R}^{K_2} \right\}.
\]
The dimension of this subspace is at most $1 + K_1+K_2$ and the rank of a matrix in $\mathcal{X}_0$ is less that $3$. We finally consider a setting with missing observations. Denote by $\Omega \subset \intervi{n} \times \intervi{p}$ the set of observed entries:  $(i,j)\in\Omega$ if and only if $Y_{ij}$ is observed. Define also the random variables $(\omega_{ij})$ such that $\omega_{ij} = 1$ if $Y_{ij}$ is observed and $\omega_{ij} = 0$ otherwise. 
We assume that $(\omega_{ij})$ and $Y$ are independent, and a Missing Completely At Random (MCAR) scenario \citep{Little02} where $(\omega_{ij})$ are independent Bernoulli random variables. For $(i,j)\in\{1,\ldots,n\}\times\{1,\ldots,p\}$, we denote $\pi_{ij} = \mathbb{P}(\omega_{ij}=1)$. We assume the probability of observing any entry is positive, i.e. there exists $\pi>0$ such that
 \begin{equation}
\label{eq:mcar}
\min \set{\pi_{ij}}{(i,j) \in \intervi{n}\times\intervi{p}} =  \pi > 0 \eqsp.
\end{equation}
For $j\in\intervi{p}$, denote by $\pi_{.j} = \sum_{i=1}^{n}\pi_{ij}$ the probability of observing an element in the $j$-th column. Similarly, for $i\in\intervi{n}$, denote by $\pi_{i.} = \sum_{j=1}^{p}\pi_{ij}$ the probability of observing an element in the $i$-th row. We define the following upper bound:
\begin{equation}
\label{eq:beta}
\max \left(\set{\pi_{i.}}{i \in \intervi{n}} \cup \set{\pi_{.j}}{j \in \intervi{p}} \right)\leq \beta \eqsp.
\end{equation}
We can now define our data-fitting term:
\begin{equation}
\label{eq:likelihood}
\mathcal{L}(X) = \sum_{(i,j)\in\intervi{n}\times\intervi{p}}\omega_{ij}\left\{-Y_{ij}{X}_{ij} + \exp({X}_{ij})\right\}.
\end{equation}
Denote $\norm{\cdot}$ the operator norm (the largest singular value), $\norm{\cdot}[\infty]$ the infinity norm (the largest entry in absolute value) and $\norm{\cdot}[*]$ the nuclear norm (the sum of singular values). Our estimator of model \eqref{eq:x-model2}, for a given regularization parameter $\lambda>0$, is the minimizer of the data-fitting term \eqref{eq:likelihood} penalized by the nuclear norm of $\Theta$: 
\begin{equation}
\label{eq:estimator}
\begin{aligned}
& (\hat X_0,\hat\Theta)&& \in{\text{argmin}}\quad\mathcal{L}(X_0 +\Theta)+\lambda\norm{\Theta}[*],\\
&\text{such that} &&\norm{{X_0} + \Theta}[\infty]\leq \gamma, (i,j)\in\intervi{n}\times\intervi{p}\\
& && X_0\in\mathcal{X}_0,\quad \Theta \in \mathcal{T}.
\end{aligned}
\end{equation}
Denote by $\nabla \mathcal{L}(X) = \sum_{(i,j)\in\intervi{n}\times\intervi{p}}\omega_{ij}\left\{-Y_{ij} + \exp({X}_{ij})\right\}E_{ij}$, where $(E_{ij})$ are the matrices of the canonical basis of $\mathbb{R}^{n\times p}$, the gradient of $\mathcal{L}$ at $X$.
Denote also $\partial^2\mathcal{L}/\partial x_{ij}^2$ the second derivative of $\mathcal{L}$ with respect to the $(i,j)$-th coordinate. Consider the following condition:
\begin{assumption}
\label{ass:L}
The function $\mathcal{L}$ is strongly convex and smooth on $[-\gamma - \varepsilon, \gamma +\varepsilon]^{n\times p}$ for some $\varepsilon>0$.
There exist $\sigma_->0$ and $\sigma_+<\infty$ such that for all $X\in [-\gamma - \varepsilon, \gamma+ \varepsilon]^{n\times p}$ and $(i,j)\in\intervi{n}\times\intervi{p}$, $\sigma_-^2\leq \partial^2\mathcal{L}(X)/\partial x_{ij}^2\leq \sigma_+^2$.
\end{assumption}

\subsection{Statistical guarantees} \label{guarantee}
We now derive an upper bound on the Frobenius estimation error of estimator~\eqref{eq:estimator}. Let $(\epsilon_{ij})$, $(i,j)\in\intervi{n}\times\intervi{p}$ be i.i.d. Rademacher random variables independent of $Y$ and $\Omega$ and define
\begin{equation}
\label{eq:def-SigmaR}
\Sigma_R = \sum_{(i,j)\in\intervi{n}\times\intervi{p}}\epsilon_{ij}\omega_{ij}E_{ij}.
\end{equation}
\begin{theorem}
\label{th:global-bound-1} 
Assume \textbf{H}~\ref{ass:boundedness}-\ref{ass:L}, and $\lambda \geq 2\norm{\nabla\mathcal{L}(X^*)}$. Then for all $n,p\geq 1$, with probability at least $1-8(n+p)^{-1}$,
\begin{equation}
\label{eq:global-bound-1}
\left\|X^*-\hat{X}\right\|_{F}^2\leq \frac{C}{\pi^2}\left(\left[\frac{\lambda^2}{\sigma_-^4} + (\mathbb{E}\norm{\Sigma_R})^2\gamma^2\right](\operatorname{rank}(\Theta^*)+r) + \log(n+p) \right),
\end{equation}
where $r$, $\gamma$ are defined in \eqref{eq:rank-r} and \eqref{eq:estimator}, $C$ is a numerical constant whose value can be found in the proof and which is independent of $n$, $p$ and $X^*$.
\end{theorem}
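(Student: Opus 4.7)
The plan is to follow the standard template for nuclear-norm penalized M-estimators with missing data, adapted to the two-block decomposition \eqref{eq:general-model}: a basic inequality from optimality, a cone inclusion via the decomposability of the nuclear norm on the tangent space of $\Theta^*$, and a restricted strong convexity step to pass from the observed sum of squares to the Frobenius norm, which is where $\mathbb{E}\norm{\Sigma_R}$ enters. Throughout I write $\Delta = \hat X - X^* = \Delta_0 + \Delta_\Theta$ with $\Delta_0\in\mathcal{X}_0$ and $\Delta_\Theta\in\mathcal{T}$.

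First, combining the optimality of $(\hat X_0,\hat\Theta)$ with the strong convexity granted by \textbf{H}\ref{ass:L} (applicable because the constraint $\norm{\hat X}[\infty]\leq\gamma$ and \eqref{eq:x-def} together with \textbf{H}\ref{ass:boundedness} keep both $\hat X$ and $X^*$ in the region where the Hessian bounds hold) yields
\[
\tfrac{\sigma_-^2}{2}\sum_{(i,j)}\omega_{ij}\Delta_{ij}^2 \leq -\langle\nabla\mathcal{L}(X^*),\Delta\rangle + \lambda\bigl(\norm{\Theta^*}[*]-\norm{\hat\Theta}[*]\bigr).
\]
Trace duality bounds $|\langle\nabla\mathcal{L}(X^*),\Delta_\Theta\rangle|$ by $\norm{\nabla\mathcal{L}(X^*)}\norm{\Delta_\Theta}[*]$, and since $\rank(\Delta_0)\leq r$ the residual piece $|\langle\nabla\mathcal{L}(X^*),\Delta_0\rangle|$ is at most $\norm{\nabla\mathcal{L}(X^*)}\sqrt{r}\norm{\Delta_0}[F]$. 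Writing $\norm{\Theta^*+\Delta_\Theta}[*]\geq\norm{\Theta^*}[*]+\norm{\mathcal{P}_{\mathcal{S}^\perp}\Delta_\Theta}[*]-\norm{\mathcal{P}_{\mathcal{S}}\Delta_\Theta}[*]$ with $\mathcal{S}$ the tangent space of $\Theta^*$, and using $\lambda\geq 2\norm{\nabla\mathcal{L}(X^*)}$, one obtains the cone inclusion $\norm{\mathcal{P}_{\mathcal{S}^\perp}\Delta_\Theta}[*]\leq 3\norm{\mathcal{P}_{\mathcal{S}}\Delta_\Theta}[*]$ up to lower-order $\Delta_0$-terms, whence $\norm{\Delta}[*]\leq C\sqrt{\rank(\Theta^*)+r}\,\norm{\Delta}[F]$ (using $\rank(\mathcal{P}_{\mathcal{S}}\Delta_\Theta)\leq 2\rank(\Theta^*)$).

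The hard part will be upgrading the sampled quadratic $\sum_{(i,j)}\omega_{ij}\Delta_{ij}^2$ to $\pi\norm{\Delta}[F]^2$ uniformly over this cone. I would follow the peeling / restricted strong convexity strategy of Klopp and Koltchinskii: on dyadic shells $\{\norm{\Delta}[F]\asymp t\}\cap\{\norm{\Delta}[*]\leq C\sqrt{\rank(\Theta^*)+r}\,t\}\cap\{\norm{\Delta}[\infty]\leq 2\gamma\}$, apply Talagrand's inequality to $Z_t=\sup|\sum\omega_{ij}\Delta_{ij}^2-\pi\norm{\Delta}[F]^2|$, symmetrize to introduce the Rademacher matrix $\Sigma_R$, invoke the contraction principle for the Lipschitz map $x\mapsto x^2$ on $[-2\gamma,2\gamma]$, and bound the remaining linear supremum by duality $\sup_{\norm{\Delta}[*]\leq R}\langle\Sigma_R,\Delta\rangle = R\norm{\Sigma_R}$. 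This yields $\mathbb{E} Z_t\lesssim \gamma t\sqrt{\rank(\Theta^*)+r}\,\mathbb{E}\norm{\Sigma_R}$, and a union bound over dyadic scales combined with the concentration tail gives, with probability at least $1-O((n+p)^{-1})$,
\[
\sum_{(i,j)}\omega_{ij}\Delta_{ij}^2 \geq \tfrac{\pi}{2}\norm{\Delta}[F]^2 - C\gamma^2\bigl[(\mathbb{E}\norm{\Sigma_R})^2(\rank(\Theta^*)+r)+\log(n+p)\bigr].
\]

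Substituting this lower bound and $\norm{\Delta_\Theta}[*]\leq 4\sqrt{2\rank(\Theta^*)}\,\norm{\Delta_\Theta}[F]$ into the first inequality produces a quadratic of the form $a\norm{\Delta}[F]^2\leq b\norm{\Delta}[F]+c$ with $a\asymp\pi\sigma_-^2$, $b\asymp\lambda\sqrt{\rank(\Theta^*)+r}$ and $c\asymp\gamma^2\bigl[(\mathbb{E}\norm{\Sigma_R})^2(\rank(\Theta^*)+r)+\log(n+p)\bigr]$, whose resolution $\norm{\Delta}[F]^2\lesssim b^2/a^2 + c/a$ yields \eqref{eq:global-bound-1}. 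The genuine obstacle is the restricted strong convexity step: without the cone inclusion from the decomposability argument, the uniform lower bound on the empirical norm fails on general low-rank matrices, and it is precisely the cone condition that makes the error term scale with $(\mathbb{E}\norm{\Sigma_R})^2(\rank(\Theta^*)+r)$ instead of an $np$-type error.
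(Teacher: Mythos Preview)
Your proposal is correct and follows essentially the same route as the paper: basic inequality via optimality plus strong convexity (\textbf{H}\ref{ass:L}), decomposability of the nuclear norm on the tangent space of $\Theta^*$ to obtain the cone condition $\norm{\Delta}[*]\lesssim\sqrt{\rank(\Theta^*)+r}\,\norm{\Delta}[F]$, then a restricted strong convexity step on this cone via peeling, Talagrand, symmetrization and contraction to introduce $\mathbb{E}\norm{\Sigma_R}$, and finally resolution of the resulting quadratic in $\norm{\Delta}[F]$. The only cosmetic difference is that the paper peels on $\mathbb{E}[\Sigma(\omega,X)]=\sum_{ij}\pi_{ij}X_{ij}^2$ rather than on $\norm{\Delta}[F]$, and handles the small-norm regime as a separate Case~1 (with threshold $\eta\asymp\log(n+p)/\pi$) instead of folding the $\log(n+p)$ slack directly into the RSC lower bound as you do; both presentations yield the same final inequality.
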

\begin{proof}
The proof is postponed to \Cref{proof:global-bound-1}.
\end{proof}
We then control $\mathbb{E}\norm{\Sigma_R}$, and compute a value of $\lambda$ such that the condition $\lambda\geq 2\norm{\nabla \mathcal{L}(X^*)}$ holds with high probability. We will need the following additional assumption on the distribution of the counts:
\begin{assumption}
\label{ass:orlicz}
There exists $\delta>0$ such that for all $(i,j) \in \intervi{n} \times \intervi{p}$, 
\[
\mathbb{E}\left[\exp(|Y_{ij}|/\delta)\right]<+\infty \eqsp.
\]
\end{assumption}
Define the following quantities, with $C^*$ a numerical constant defined in \Cref{lemma:SigmaR} and $r$, $\beta$ and $\gamma$ defined in \eqref{eq:rank-r}, \eqref{eq:beta} and \eqref{eq:estimator} respectively:
\begin{equation}
\label{eq:def-phi}
\begin{aligned}
&\Phi_1 &&= 48\sigma_+^2\beta\log(n+p),\\
&\Phi_2 &&= 36\delta^2(\mathrm{e}-1)^2\log^1\left(1+8\delta^2\frac{np}{\beta\sigma_-^2}\right)\log^2(n+p)\\
&\Phi_3 &&= 4{C^*}^2\max(\beta^2,\log\{\min(n,p)\}).
\end{aligned}
\end{equation}
\begin{theorem}
\label{th:global-bound} 
Assume \Cref{ass:boundedness}-\Cref{ass:orlicz} and set 
\[
\lambda = \max\left\{4\sigma_{+}\sqrt{3\beta\log(n+p)}, 12\delta(e-1)\log\left(1+8\delta^2\frac{np}{\beta\sigma_-^2}\right)\log(n+p)\right\}.
\]
Then with probability at least $1-10(n+p)^{-1}$,
\begin{equation}
\label{eq:global-bound}
\left\|X^*-\hat{X}\right\|_{F}^2\leq \frac{C}{\pi^2}\left\{\left(\max(\Phi_1,\Phi_2) + \Phi_3\right)\left(\operatorname{rank}(\Theta^*)+r\right) + \log(n+p)\right\},
\end{equation}
where $C$ is a numerical constant independent of $n$, $p$ and $X^*$. 
\end{theorem}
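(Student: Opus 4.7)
\medskip
\noindent\textbf{Plan.} I would deduce \Cref{th:global-bound} from \Cref{th:global-bound-1} by (a) verifying that the chosen $\lambda$ satisfies the hypothesis $\lambda\geq 2\norm{\nabla\mathcal{L}(X^*)}$ with probability at least $1-2(n+p)^{-1}$, and (b) upper bounding $(\mathbb{E}\norm{\Sigma_R})^2\gamma^2$ by a multiple of $\Phi_3$. Plugging these two bounds into \eqref{eq:global-bound-1} and taking a union bound over the ``good'' events then delivers \eqref{eq:global-bound} with the stated probability $1-10(n+p)^{-1}$.

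For step~(a), note that $\nabla\mathcal{L}(X^*)=\sum_{(i,j)\in\intervi{n}\times\intervi{p}}\omega_{ij}\bigl(\exp(X^*_{ij})-Y_{ij}\bigr)E_{ij}$ is a sum of independent random matrices, each supported on a single coordinate, whose entries have mean zero by the very definition \eqref{eq:x-def} of $X^*$. Since $Y_{ij}$ is only assumed sub-exponential via \Cref{ass:orlicz}, the bounded matrix Bernstein inequality is not directly applicable. I would therefore truncate at the level $M\asymp\delta\log\bigl(1+8\delta^2 np/(\beta\sigma_-^2)\bigr)$, split each $Y_{ij}=Y_{ij}\mathbbm{1}\{|Y_{ij}|\leq M\}+Y_{ij}\mathbbm{1}\{|Y_{ij}|>M\}$, and treat the two pieces separately. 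Using \Cref{ass:boundedness} and \eqref{eq:beta}, the column (resp.~row) sums of the variance proxies are bounded by $\sigma_+^2\beta$, so that matrix Bernstein applied to the truncated centered sum produces a bound of the order $\sigma_+\sqrt{\beta\log(n+p)}+M\log(n+p)$, which matches the two scales inside the max defining $\lambda$. The untruncated tail is handled by a union bound combined with Markov's inequality applied to $\mathbb{E}[\exp(|Y_{ij}|/\delta)]<\infty$; the choice of $M$ guarantees that this contribution is lower order with probability at least $1-(n+p)^{-1}$.

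For step~(b), $\Sigma_R$ as defined in \eqref{eq:def-SigmaR} has entries that are Rademacher variables masked independently by Bernoullis. The auxiliary lemma alluded to in the definition of $\Phi_3$ (providing the constant $C^*$) yields a bound of the form $\mathbb{E}\norm{\Sigma_R}\leq C^*\max\bigl(\sqrt{\beta},\sqrt{\log\min(n,p)}\bigr)$, standard since the proof reduces to applying the symmetrization inequality followed by a matrix Bernstein bound on a sum of bounded mean-zero random matrices. Squaring gives $\Phi_3/4$ up to the $\gamma^2$ factor, which is absorbed into the final numerical constant. Combining this with the identifications $\lambda^2/\sigma_-^4\lesssim \max(\Phi_1,\Phi_2)$ (with the $\sigma_-^{-4}$ factor likewise absorbed into $C$) inside the bound \eqref{eq:global-bound-1} of \Cref{th:global-bound-1} produces \eqref{eq:global-bound}.

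\medskip
The main obstacle is step~(a): calibrating the truncation level $M$ precisely enough so that the truncated sum is concentrated at the first scale of the max defining $\lambda$, while the truncated mass contributes only to the second scale $M\log(n+p)$. This trade-off is exactly what generates the extra $\log\bigl(1+8\delta^2 np/(\beta\sigma_-^2)\bigr)$ factor in $\lambda$ and in $\Phi_2$, and is the reason why Assumption \Cref{ass:orlicz} on exponential moments is needed in addition to \Cref{ass:boundedness} and \Cref{ass:L}.
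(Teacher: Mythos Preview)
Your reduction to \Cref{th:global-bound-1} via the two ingredients (a) and (b) is exactly the structure of the paper's proof, and your treatment of step~(b) is close to what the paper does (it invokes an extension of Bandeira's dimension-free bound rather than symmetrization followed by matrix Bernstein, but the resulting estimate $\mathbb{E}\norm{\Sigma_R}\leq C^*\{\sqrt{\beta}+\sqrt{\log\min(n,p)}\}$ is the same).

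The genuine difference is in step~(a). The paper does \emph{not} truncate. Instead it applies a matrix Bernstein inequality valid directly under an Orlicz-$\psi_1$ condition (an extension of Theorem~4 in \citet{koltchinskii} to rectangular matrices via self-adjoint dilation). That inequality already comes with two regimes: a sub-Gaussian tail when $t\bar U\leq 2(e-1)\sigma_Z^2 np$ and a sub-exponential tail otherwise, with $\bar U=2\delta\log(1+8\delta^2/\sigma_Z^2)$. Combined with the variance computation $\sigma_-^2\beta/(np)\leq\sigma_Z^2\leq\sigma_+^2\beta/(np)$, this produces the two branches of the max defining $\lambda$ in one stroke, and in particular explains the somewhat unusual factor $\log\bigl(1+8\delta^2 np/(\beta\sigma_-^2)\bigr)$: it is literally the $\bar U$ from the Koltchinskii inequality, not an artifact of a truncation level. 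Your truncation route is a legitimate alternative and would yield a bound of the same order, but recovering the \emph{exact} form of $\lambda$ stated in the theorem (and hence of $\Phi_2$) would require reverse-engineering the truncation level $M$ to mimic $\bar U$, which is awkward; the paper's route gives it for free. Also, a small slip: in step~(b) there is no symmetrization to perform, since $\Sigma_R$ is already a Rademacher sum.
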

\begin{proof}
The proof is postponed to \Cref{proof:global-bound}.
\end{proof}
We recover an upper bound of order $\operatorname{rank}(\Theta^*)\beta/\pi^2$, which is classical in low-rank matrix estimation and completion \citep{Klopp2014, Lafond2015} and equal to $\operatorname{rank}(\Theta^*)\max(n,p)/\pi$ when the sampling is almost uniform ($c_1\pi\leq\pi_{ij}\leq c_2\pi$). The additional term $r\beta/\pi^2$ accounts for explicit modeling of the covariates in the main effects. The constant term appearing in bound \eqref{eq:global-bound} grows linearly with the upper bound $\sigma_{+}^2$ and quadratically with the inverse of $\sigma_{-}^2$. This means that by relaxing Assumption~\ref{ass:boundedness} to allow $\operatorname{var}(Y_{ij})$ to grow as fast as $\log(n+p)$ or decrease as fast as $1/\log(n+p)$, we only lose a log-polynomial factor in bound \eqref{eq:global-bound}.

\section{Algorithm and selection of $\lambda$}
\label{lambda-choice}

\subsection{Optimization algorithm}
\label{optim}
In this section, we propose an algorithm to solve the initial \eqref{eq:estimator} for the initial model
\begin{equation*}
X^{*}_{ij} = \mu^{*} + \sum_{k=1}^{K_1}R_{ik}\alpha^*_k + \sum_{l=1}^{K_2}C_{il}\beta_l^* + \Theta^{*}_{ij},
\end{equation*}
using \textit{alternating minimization} \citep{csiszar_information_1984}, which consists in updating $\mu$, $\alpha$, $\beta$ and $\Theta$ alternatively, each time along a descent direction. Note that, in the algorithm and the entire numerical section, we relax the constraint $|\mu + R_{i,.}\alpha + C_{j,.}\beta+\Theta_{ij}|\leq \gamma$. Indeed, this constraint is mainly required to obtain statistical guarantees, and we observed that in practice, for $\gamma$ sufficiently large, this constraint is never reached. Denote 
$$\mathcal{F}(\mu,\alpha,\beta,\Theta) = \mathcal{L}((\mu + R_{i,.}\alpha + C_{j,.}\beta)_{i,j} + \Theta),$$
and $\nabla_{\Theta}\mathcal{F}$ the gradient of $\mathcal{F}$ with respect to $\Theta$  defined by 
$(\nabla_{\Theta}\mathcal{F}(\mu,\alpha,\beta,\Theta))_{ij} = -Y_{ij} + \exp(\mu + R_{i,.}\alpha+C_{j,.}\beta+\Theta_{ij})$ if $\omega_{ij}=1$ and $(\nabla_{\Theta}\mathcal{F}(\mu,\alpha,\beta,\Theta))_{ij} =0$ otherwise.
At every iteration we solve three sub-problems. The sub-problem in $\mu$ can be solved in closed form at each iteration; the updates in $\alpha$ and $\beta$ can be done simultaneously by estimating a Poisson generalized linear model (which can be done using standard algorithms implemented in available libraries); the update in $\Theta$ is along the proximal gradient direction, with a step size tuned using backtracking line search. Denote by $\mathcal{D}_{\lambda}$ the soft-thresholding operator of singular values at level $\lambda$ \cite[Section 2]{cai2010singular}. The procedure is sketched in Algorithm~\ref{algo:altmin-mis}.
\begin{figure}[H]
\footnotesize
\centering
\begin{minipage}{\linewidth}
\begin{algorithm}[H]
\textbf{Initialize} $\mu^{[0]}$, $\alpha^{[0]}$, $\beta^{[0]}$, $\Theta^{[0]}$\\
\textbf{For $t = 0,1,\ldots, T-1$}
\begin{itemize}
\item $\mu^{[t+1]}\in \operatorname{argmin} \mathcal{F}(\mu,\alpha^{[t]},\beta^{[t]},\Theta^{[t]})$,
\item $(\alpha^{[t+1]},\beta^{[t+1]})\in \operatorname{argmin} \mathcal{F}(\mu^{[t]},\alpha,\beta,\Theta^{[t]}),$
\item $\tau = 1$,
\item $\Theta^{[t+1]} = \mathcal{D}_{\lambda}[\Theta^{[t]} - \tau\mathcal{P}\{\nabla_{\Theta}\mathcal{F}(\mu^{[t]},\alpha^{[t]},\beta^{[t]},\Theta^{[t]})\}]$
\item While $\mathcal{F}(\mu^{[t+1]},\alpha^{[t+1]},\beta^{[t+1]},\Theta^{[t+1]})+ \lambda\norm{\Theta^{[t+1]}}[*] > \mathcal{F}(\mu^{[t+1]},\alpha^{[t+1]},\beta^{[t+1]},\Theta^{[t]})+ \lambda\norm{\Theta^{[t]}}[*] $:
\begin{itemize}
\item $\tau = \tau/2$
\item $\Theta^{[t+1]} = \mathcal{D}_{\lambda}[\Theta^{[t]} - \tau\mathcal{P}\{\nabla_{\Theta}\mathcal{F}(\mu^{[t]},\alpha^{[t]},\beta^{[t]},\Theta^{[t]})\}]$.
\end{itemize}
\end{itemize}
\textbf{Output} $\mu^{[T]}$, $\alpha^{[T]}$, $\beta^{[T]}$, $\Theta^{[T]}$
\caption{Alternating minimization for problem \eqref{eq:estimator}}
\label{algo:altmin-mis}
 \end{algorithm}
\end{minipage}
\end{figure}
Note that if $K_1+K_2> |\Omega|$, with $|\Omega|$ denoting the cardinality of $\Omega$, the update in $\alpha$ and $\beta$ does not have a unique solution. However in our targeted applications, typically $K_1+K_2\ll |\Omega|$. In the package \texttt{lori}, we additionally implemented a warm-start strategy \citep{friedman2007}, which consists in solving \eqref{eq:estimator} for a large value of $\lambda$, then sequentially decreasing $\lambda$ and solving the new problem using the previous estimate as a starting point. Thus, our implementation solves \eqref{eq:estimator} for the entire regularization path at once. Note that, even though our theoretical guarantees require a MCAR mechanism, the estimation method still holds when entries are \textit{missing at random} (\citet{Little02}, Section 1.3). Its imputation properties are illustrated in an ecological application in \Cref{eco-data}.

\subsection{Automatic selection of $\lambda$}
A common way to select the regularization parameter is cross-validation, which consists in erasing a fraction of the observed cells in $Y$, estimating a complete parameter matrix $\hat{X}$ for a range of $\lambda$ values, and choosing the parameter $\lambda$ that minimizes the imputation error. This can be performed directly using LORI without modifying the code. Indeed,
let $(\tilde\omega_{ij})$ denote the weights in $\{0,1\}$ indicating which entries are observed after removing some of them for cross-validation, and denote 
$$\mathcal{\tilde F}(\mu,\alpha,\beta,\Theta) = \sum_{(i,j)\in\intervi{n}\times\intervi{p}}\tilde\omega_{ij}\{-Y_{ij}(\mu+R_{i,.}\alpha +C_{j,.}\beta +\Theta_{ij})+\exp(\mu+R_{i,.}\alpha +C_{j,.}\beta +\Theta_{ij})\}.$$
The optimization problem becomes
\begin{equation}
\label{eq:opt-mis}
\begin{aligned}
& (\hat\mu,\hat\alpha,\hat\beta,\hat\Theta)&& \in{\text{argmin}}\quad\tilde{\mathcal{F}}(\mu,\alpha,\beta,\Theta)+\lambda\norm{\Theta}[*],\\
& &&\quad \Theta\in\mathcal{T}
\end{aligned}
\end{equation}
which can be solved using the method described in \Cref{optim} (see Algorithm~\ref{algo:altmin-mis}). However, cross-validation is computationally costly. We suggest an alternative method to cross-validation, inspired by \citet{Dono94b} and the work of \citet{Sardy16} on \textit{quantile universal threshold}. In  \Cref{prop:null-lambda} below, we define the so-called \textit{null-thresholding statistic} of estimator \eqref{eq:estimator}, a function of the data $\lambda_0({Y})$ for which the estimated interaction matrix $\hat{\Theta}^{\lambda_0(Y)}$ is null, and the same estimate $\hat{\Theta}^{\lambda}=0$ is obtained for any $\lambda \geq \lambda_0({Y})$. 
\begin{theorem}[Null-thresholding statistic]
\label{prop:null-lambda}
The estimated interaction matrix $\hat{\Theta}^{\lambda}$ for a regularization parameter $\lambda$ is null if and only if $\lambda \geq \lambda_{0}(Y)$,
where $\lambda_0(Y)$ is the null-thresholding statistic
\begin{equation}
\label{eq:null-lambda}
\lambda_0({Y}) = \left\|\nabla\mathcal{L}(\hat X_0)\right\|,\quad {\rm where }\quad  \hat X_0\in{\operatorname{argmin}_{X\in \mathcal{X}_0}}\quad\mathcal{L}(X).
\end{equation}
\end{theorem}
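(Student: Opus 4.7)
The plan is to derive \eqref{eq:null-lambda} from first-order subdifferential optimality conditions for (the relaxed version of) \eqref{eq:estimator} by exhibiting an explicit dual certificate. The objective $(X_0,\Theta)\mapsto\mathcal{L}(X_0+\Theta)+\lambda\|\Theta\|_*$ is convex on the product of linear subspaces $\mathcal{X}_0\times\mathcal{T}$. Both partial gradients of $\mathcal{L}(X_0+\Theta)$ equal $\nabla\mathcal{L}(X_0+\Theta)$; the subdifferential of the nuclear norm at $0$ is the operator-norm unit ball $\{G\in\mathbb{R}^{n\times p}:\|G\|\leq 1\}$; the normal cones to $\mathcal{X}_0$ and $\mathcal{T}$ are $\mathcal{X}_0^\perp$ and $\mathcal{T}^\perp$. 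Fermat's rule therefore characterises $(X_0^\star,0)$ as a minimiser of \eqref{eq:estimator} by
\begin{equation*}
\nabla\mathcal{L}(X_0^\star)\in\mathcal{X}_0^\perp\quad\text{and}\quad\exists\,G,\ \|G\|\leq 1,\ \nabla\mathcal{L}(X_0^\star)+\lambda G\in\mathcal{T}^\perp.
\end{equation*}
The first condition is precisely the equation defining $\hat X_0=\operatorname{argmin}_{X\in\mathcal{X}_0}\mathcal{L}(X)$ (unique by the strong convexity of \Cref{ass:L}), hence $X_0^\star=\hat X_0$ whenever $\hat\Theta^\lambda=0$.

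For the sufficient direction $\lambda\geq\lambda_0(Y)\Rightarrow\hat\Theta^\lambda=0$, the explicit certificate $G=-\nabla\mathcal{L}(\hat X_0)/\lambda$ does the job: by hypothesis $\|G\|=\|\nabla\mathcal{L}(\hat X_0)\|/\lambda\leq 1$ and $\nabla\mathcal{L}(\hat X_0)+\lambda G=0\in\mathcal{T}^\perp$ trivially, so both KKT equations hold and $(\hat X_0,0)$ is a global minimiser; strict convexity of the objective in $\Theta$ then forces $\hat\Theta^\lambda=0$.

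For the converse $\hat\Theta^\lambda=0\Rightarrow\lambda\geq\lambda_0(Y)$, the optimality condition supplies some $G$ with $\|G\|\leq 1$ and $H:=\nabla\mathcal{L}(\hat X_0)+\lambda G\in\mathcal{T}^\perp$. I would then argue that $\nabla\mathcal{L}(\hat X_0)\in\mathcal{T}$: in the main-effect parametrisation of \eqref{eq:x-model2}, the first-order equations $\nabla\mathcal{L}(\hat X_0)\perp\mathcal{X}_0$ along $\mu$ and along each column of $R$ and $C$ account for the $\mathcal{T}^\perp$-directions of the gradient, so its projection onto $\mathcal{T}^\perp$ vanishes. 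Once this is in hand, the orthogonal decomposition $\mathbb{R}^{n\times p}=\mathcal{T}\oplus\mathcal{T}^\perp$ forces $H=0$, whence $\lambda G=-\nabla\mathcal{L}(\hat X_0)$ and $\|\nabla\mathcal{L}(\hat X_0)\|=\lambda\|G\|\leq\lambda$. The main obstacle is precisely this verification $\nabla\mathcal{L}(\hat X_0)\in\mathcal{T}$, which is not automatic from the abstract framework of \Cref{model} (where $\mathcal{X}_0$ may be a strict subspace of $\mathcal{T}^\perp$): in the fully abstract case one only obtains $\lambda_0(Y)=\|\mathcal{P}(\nabla\mathcal{L}(\hat X_0))\|$, and this collapses to the stated $\|\nabla\mathcal{L}(\hat X_0)\|$ exactly when the intercept together with the covariate directions of $\mathcal{X}_0$ span all of $\mathcal{T}^\perp$.
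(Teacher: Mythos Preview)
Your approach is essentially the paper's: both derive the result from first-order subdifferential optimality conditions, characterising $\hat\Theta^\lambda=0$ via the existence of a dual certificate in the nuclear-norm subdifferential at $0$, and both reduce to the condition $\lambda\geq\|\mathcal{P}^\perp(\nabla\mathcal{L}(\hat X_0))\|$ (projection onto $\mathcal{T}$) before simplifying. The paper packages the subdifferential computation into a separate lemma (the subdifferential of $\|\cdot\|_*$ restricted to $\mathcal{T}$ at $0$ is $\{W:\|\mathcal{P}^\perp(W)\|\leq 1\}$), whereas you construct the certificate $G=-\nabla\mathcal{L}(\hat X_0)/\lambda$ explicitly; these are equivalent manoeuvres.

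On the obstacle you flag: you are right, and in fact more careful than the paper. The paper simply asserts ``at the optimum $\hat X_0$, we have $\mathcal{P}_{\mathcal{T}}(\nabla\mathcal{L}(\hat X_0))=\nabla\mathcal{L}(\hat X_0)$'' to pass from $\|\mathcal{P}^\perp(\nabla\mathcal{L}(\hat X_0))\|$ to $\|\nabla\mathcal{L}(\hat X_0)\|$, without addressing that optimality over $\mathcal{X}_0$ only yields $\nabla\mathcal{L}(\hat X_0)\in\mathcal{X}_0^\perp$, which equals $\mathcal{T}$ exactly when $\mathcal{X}_0=\mathcal{T}^\perp$. So your identified gap is the same one the paper glosses over; in the concrete model \eqref{eq:x-model2} with $K_1<n$ or $K_2<p$ the clean formula \eqref{eq:null-lambda} is indeed only an upper bound on the true threshold. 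One minor notational slip: in your last sentence the projection should be $\mathcal{P}^\perp$ (onto $\mathcal{T}$), not $\mathcal{P}$.
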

\begin{proof}
The proof is postponed to \Cref{lambda-proof}.
\end{proof}
Here,  $\left\| . \right\|$ is the operator norm (the largest singular value). We propose a heuristic selection of $\lambda$ based on this null-thresholding statistic $\lambda_0(Y)$. To explain further the procedure, we first need to define the following test:
\begin{equation}
\label{test}
\mathbf{H}_0:\Theta^*= 0\text{\, \, against the alternative \, \,}\mathbf{H}_1:\Theta^*\neq 0
\end{equation}
which tests whether the parameter matrix $X^*$ can be explained only in terms of linear combinations of the measured covariates.
For a probability $\varepsilon\in(0,1)$, consider the upper $\varepsilon$-quantile $\lambda_{\varepsilon}$ of the null-thresholding statistics, namely that satisfies $\mathbb{P}_{\mathbf{H}_0}\left(\lambda_0(Y)>\lambda_{\varepsilon}\right)<\varepsilon$. The test which consists in comparing the statistics $\lambda_0(Y)$ to $\lambda_{\varepsilon}$ is of level $1-\varepsilon$ for \eqref{test}. This can be seen as an alternative to the $\chi^2$ test for independence which handles covariates.
In practice we do not have access to the distribution under the null $\mathbb{P}_{\mathbf{H}_0}\left(\lambda_0(Y)<\lambda\right)$,
but perform parametric bootstrap \citep{efron1979bootstrap} to compute a proxy $\check{\lambda}_{\varepsilon}$.
In practice we recommend $\varepsilon=0.05$ and use $\lambda_{\textsf{QUT}}:=\check{\lambda}_{.05}$, and refer to it in what follows as \textit{quantile universal threshold} (QUT). This selection of the regularization parameter is essentially the universal threshold of \citet{Dono94b} extended to our setting.\\

\section{Simulation study}
\label{experiments}
\subsection{Estimation }
We simulate $Y\in\mathbb{N}^{300\times 30}$ under model \eqref{eq:poisson}--\eqref{eq:x-model2}, with $R\in \mathbb{R}^{500\times 3}$ and $C\in\mathbb{R}^{300\times 4}$ drawn from multivariate Gaussian distributions with mean $0$ and block-diagonal covariance matrices $\Sigma_R$ and $\Sigma_C$ respectively. We set $\mu^* = 1$, $\alpha^* = (2, 0, 0)$, $\beta^* = (-2,0,0,0)$ and $\Theta$ of rank $5$. We compare the performance of LORI in terms of estimation of the regression coefficients $\alpha$ and $\beta$, and compare it to a standard Poisson Generalized Linear Model (GLM) estimated with the \texttt{glm} function in R. We repeat the experiment $100$ times for decreasing values of the ratio $\tau=\norm{\Theta}[F]/\norm{X_0}[F]$, where $X_0 = (\mu + R_i\alpha + C_j\beta)_{ij}$ is fixed. We look at the Root Mean Square Error (RMSE) for the estimation of $X_0$; the results are given in \Cref{tbl:rmse-alpha}, where we observe that LORI and the Poisson GLM are equivalent for $\tau=0$, and that LORI outperforms the GLM for non-zero interactions, with a gap widening as $\tau$ increases.
\begin{table}[ht]
\centering
\begin{tabular}{|c|ccc|}
  \hline
$\tau$& & Mean of RMSE$*100$ & Standard deviation of RMSE$*100$ \\ 
  \hline
\multirow{2}{*}{$1$}& LORI & $52$ & $1.6$ \\ 
&  GLM & $143$ & $29$ \\ 
   \hline
   \multirow{2}{*}{$ 0.5$}& LORI & $17$ & $0.9$ \\ 
&  GLM & $22$ & $1.0$ \\ 
   \hline
   \multirow{2}{*}{$0.25$}& LORI & $4.3$ & $0.37$ \\ 
&  GLM & $5.3$ & $0.33$ \\ 
   \hline
   \multirow{2}{*}{$0.1$}& LORI & $1.7$ & $0.34$ \\ 
& GLM & $2.6$ & $0.34$ \\ 
   \hline
      \multirow{2}{*}{$0$}& LORI & $0.88$ & $0.2$ \\ 
&  GLM & $0.87$ & $0.19$ \\ 
   \hline
\end{tabular}
\caption{Estimation error (RMSE) of regression coefficients $\sqrt{\norm{\hat\alpha-\alpha^*}[2]^2+\norm{\hat\beta-\beta^*}[2]^2}$ of LORI and a Poisson GLM, for decreasing values of $\tau = \norm{\Theta}[F]/\norm{X_0}[F]$.}
\label{tbl:rmse-alpha}
\end{table}

Second, we compare LORI to a convex low-rank matrix estimatiob procedure with a Poisson loss function and where covariates are not modeled (e.g. \citet{Lafond2015}), in terms of the relative estimation error $\norm{\hat X- X^*}[F]/\norm{X^*}[F]$ (because $\norm{X}[F]$ varies with $\tau$). We refer to this competitor as "Poisson LRM". Again, we reproduce the experiment $100$ times for decreasing values of the ratio $\tau=\norm{\Theta}[F]/\norm{X_0}[F]$. 
\begin{table}[ht]
\centering
\begin{tabular}{|c|ccc|}
  \hline
$\tau$& & Mean of Relative RMSE$*100$ & Standard deviation of Relative RMSE$*100$ \\ 
  \hline
\multirow{2}{*}{$1$}& LORI & $82$ & $1.8$ \\ 
&  Poisson LRM & $95$ & $4.4$ \\ 
   \hline
   \multirow{2}{*}{$0.5$}& LORI & $40$ & $0.25$ \\ 
&  Poisson LRM & $50$ & $0.17$ \\ 
   \hline
   \multirow{2}{*}{$0.25$}& LORI & $24$ & $0.12$ \\ 
&  Poisson LRM & $40$ & $0.12$ \\ 
   \hline
   \multirow{2}{*}{$0.1$}& LORI & $11$ & $0.081$ \\ 
&  Poisson LRM & $36$ & $1.5$ \\ 
   \hline
      \multirow{2}{*}{$0$}& LORI & $4.3$ & $0.1$ \\ 
&  Poisson LRM & $34$ & $1.2$ \\ 
   \hline
\end{tabular}
\caption{Estimation error (Relative RMSE) of parameter matrix $\norm{\hat X-X^*}[F]/\norm{X^*}[F]$ of LORI and a Poisson GLM, for decreasing values of $\tau = \norm{\Theta}[F]/\norm{X_0}[F]$.}
\label{tbl:rmse-x}
\end{table}
On \Cref{tbl:rmse-x}, we observe that LORI achieves lower errors than Poisson LRM, which is expected as we simulated under the LORI model. As $\tau$ decreases -- i.e. the size of the main effects increases relative to the interactions -- both errors decrease as well, and the gap between LORI and the Poisson LRM widens, indicating that modeling covariates explicitly improves the estimation.

\subsection{Imputation}
Using the same simulation scheme, we now compare LORI in terms of missing values imputation to Correspondence Analysis (CA) and Trends \& Indices for Monitoring data \citet{trim} (TRIM), a method based on a Poisson log-linear model used to impute bird abundance data. To do so we erase an increasing proportion of entries in the data and impute them using LORI, CA and TRIM, replicating the experiment 100 times. We also impute the missing values using the column means, as a baseline referred to as "MOY". We observe on \Cref{impute-ca-lori-trim} that LORI performs best, which is expected as we simulate under the LORI model. Moreover, the gap widens as the percentage of missing values increases. In particular, the error of TRIM for 80\% of missing entries is not represented because the method fails (we use the default parameters of the R package \texttt{rtrim}).
\begin{figure}[H]
  \centering
      \begin{subfigure}[b]{0.4\textwidth}
            \centering
            \includegraphics[scale=0.35]{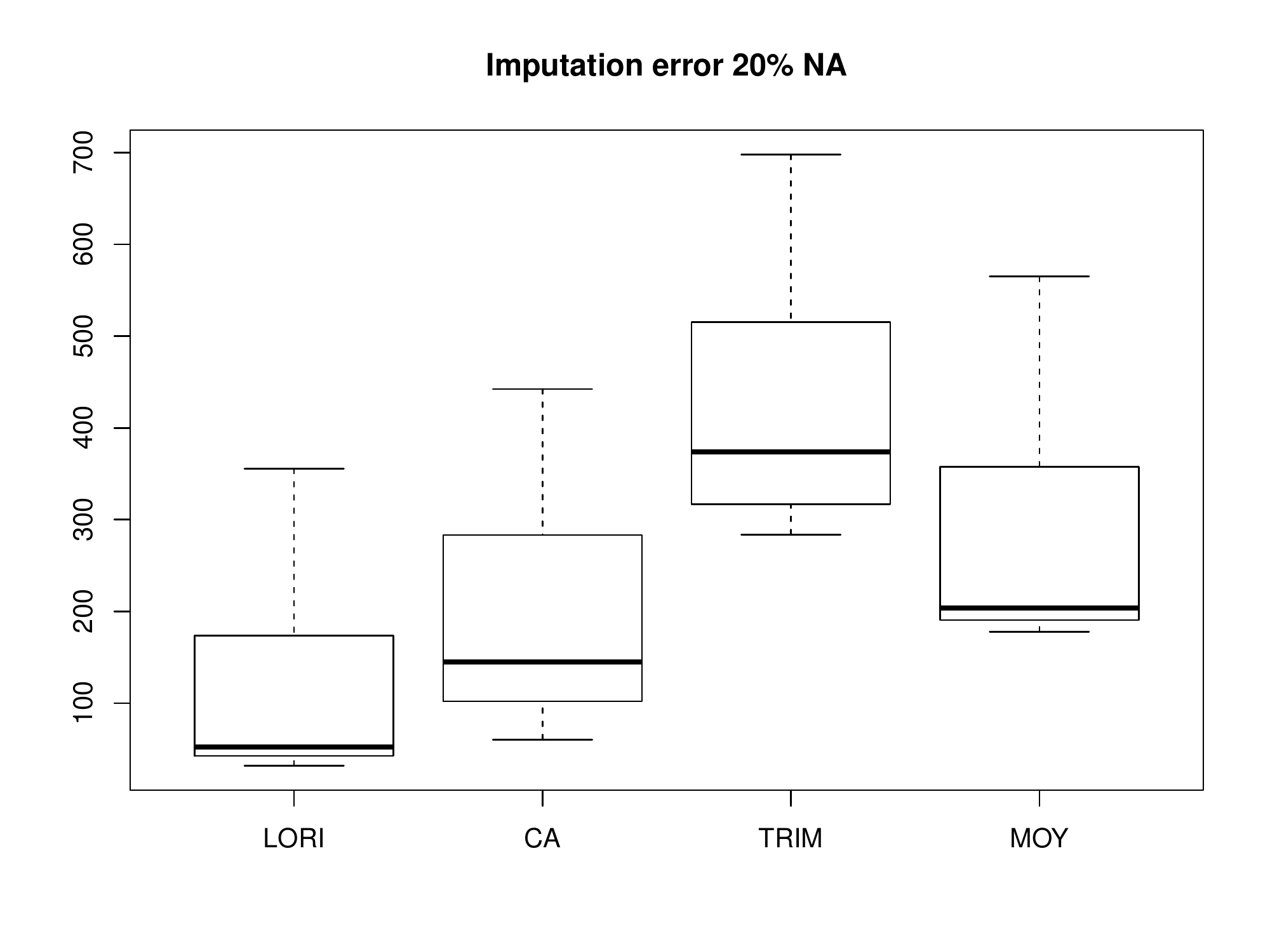}
            \caption{20\% of missing entries}
    \end{subfigure}
\hspace{1cm}
       \begin{subfigure}[b]{0.4\textwidth}
            \centering
            \includegraphics[scale=0.35]{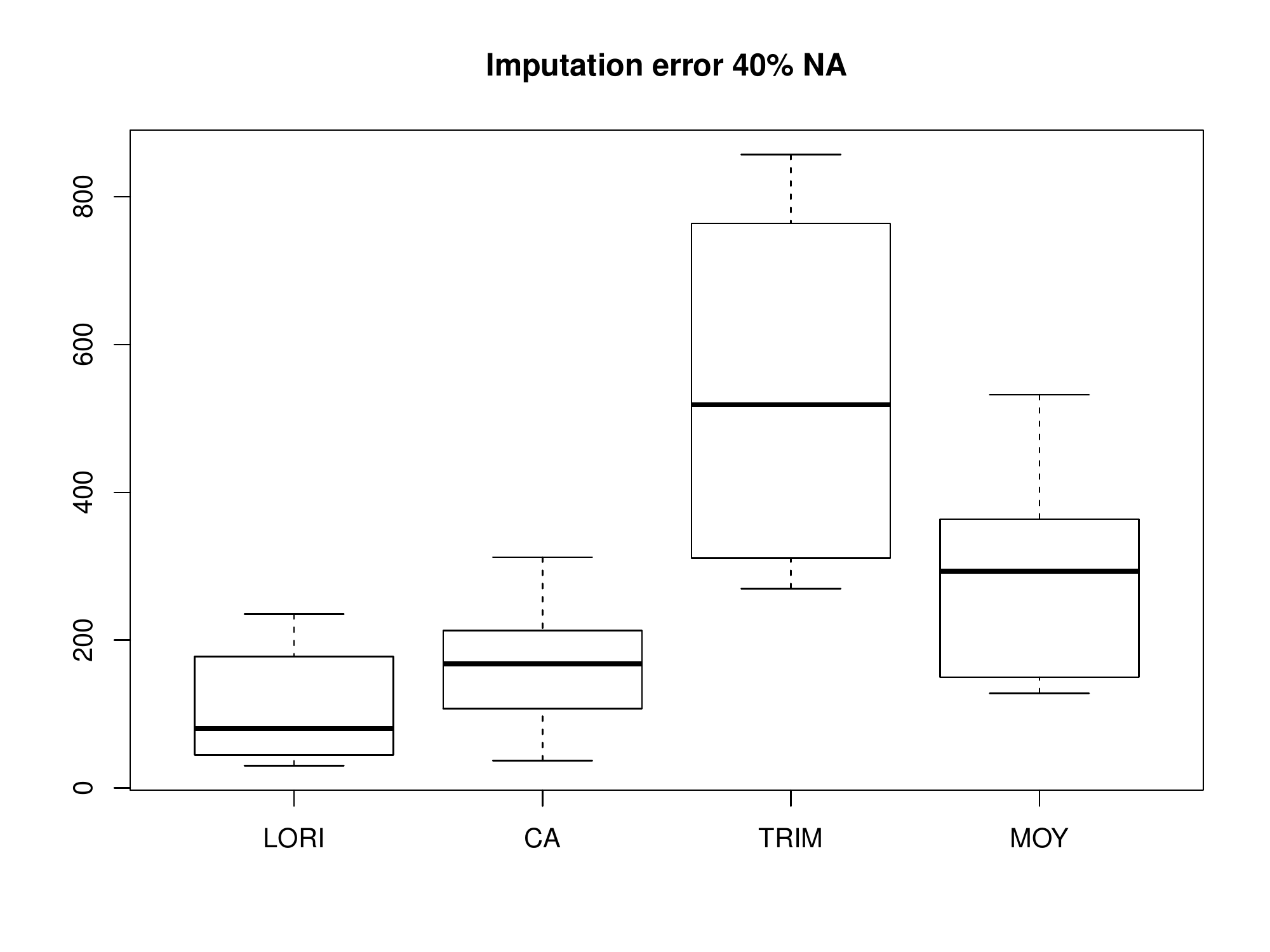}
            \caption{40\% of missing entries}
    \end{subfigure}
         \begin{subfigure}[b]{0.4\textwidth}
            \centering
            \includegraphics[scale=0.35]{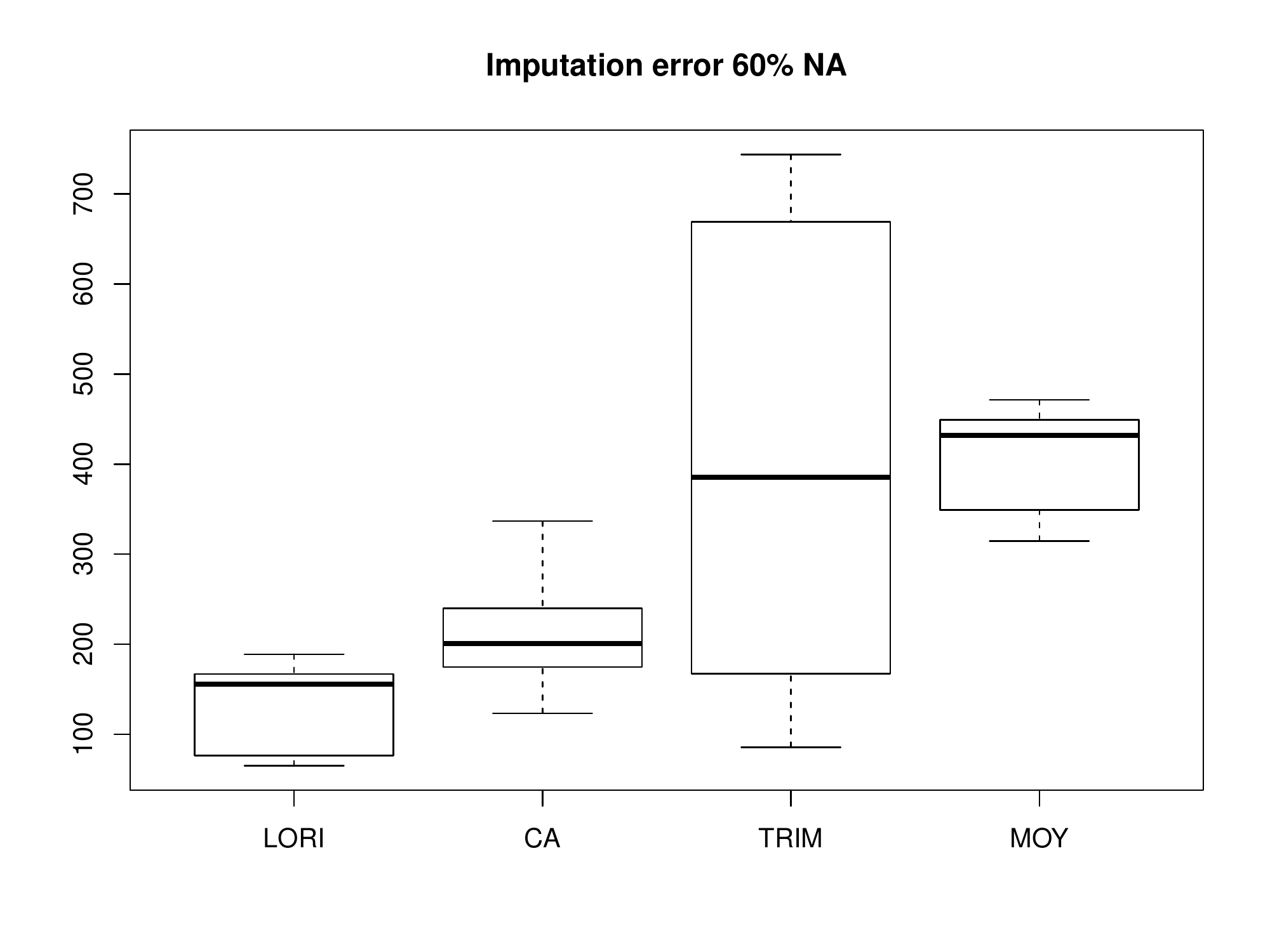}
            \caption{60\% of missing entries}
    \end{subfigure}
\hspace{1cm}
       \begin{subfigure}[b]{0.4\textwidth}
            \centering
            \includegraphics[scale=0.35]{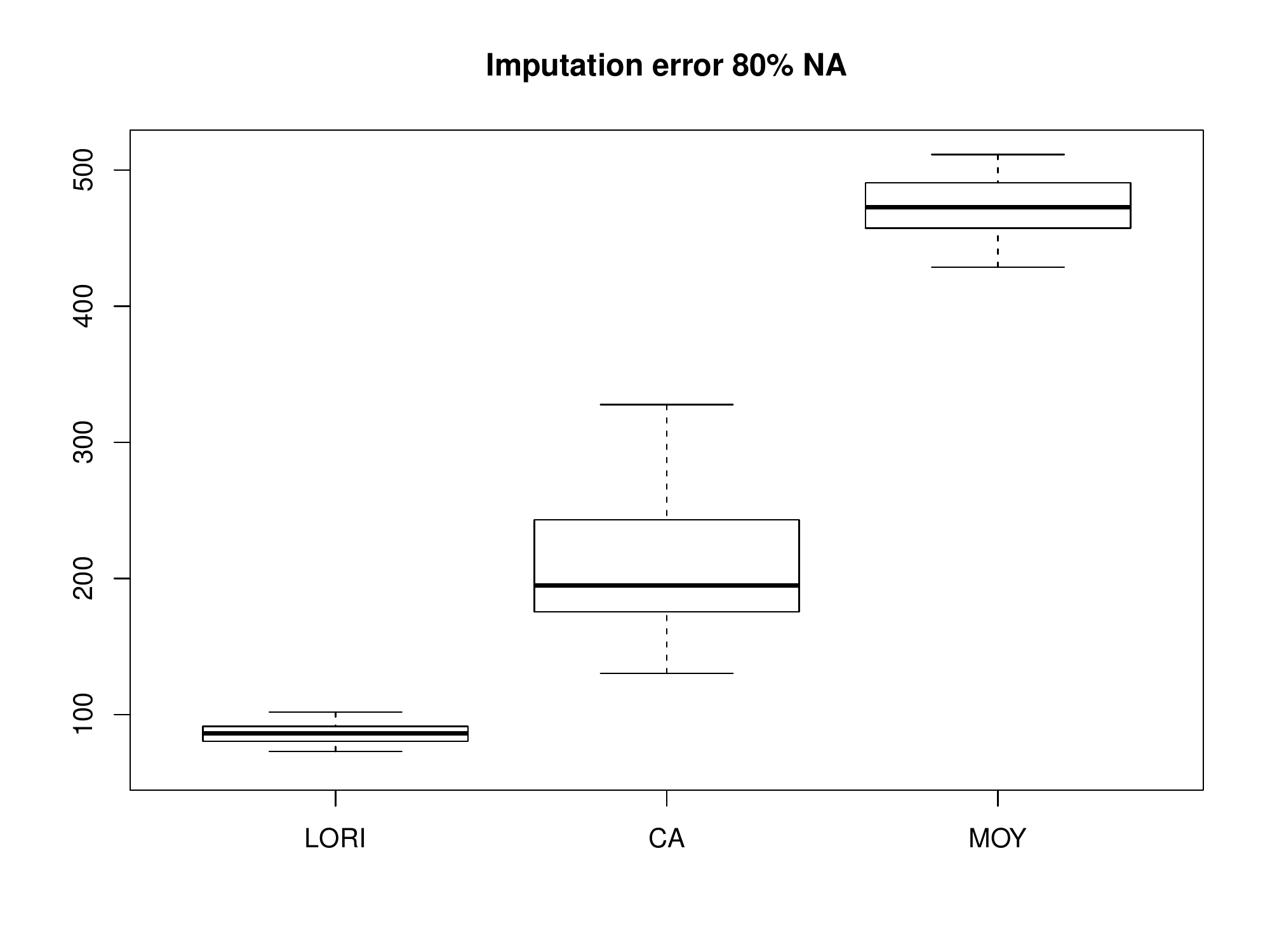}
            \caption{80\% of missing entries}
    \end{subfigure}
    \caption{Average imputation error $\sum_{(i,j)\in \Omega} (Y_{ij}-\hat{Y}_{ij})^2/|\Omega|$.}
  \label{impute-ca-lori-trim}
\end{figure}

\section{Analysis of the Aravo data}
\label{real-data}
The Aravo data set \citep{aravo} counts the abundance of $82$ species of alpine plants in $75$ sites in France; covariates about the environments and species are also available. We focus on $8$ species traits providing physical information about plants (height, spread, etc.), and $4$ environmental variables giving geographical and meteorological information about sites. We apply our method LORI after scaling the covariates and tuning the regularization parameter with the QUT method. This results in estimates for the main effects of the environment characteristics $\alpha$ and of the species traits $\beta$, and of the interaction matrix $\Theta$.
\begin{table}[H]
\centering
\begin{tabular}{rrrrr}
  \hline
Aspect & Slope & PhysD & Snow \\
  \hline
0.04 & 0.07 & -0.02 & -0.07 \\
   \hline
\end{tabular}
\caption{Main effect of the Aravo environment characteristics estimated with LORI. The regularization parameter is tuned using QUT.}
\label{tbl:mainrow}
\end{table}
\begin{table}[H]
\centering
\begin{tabular}{rrrrrrrrr}
  \hline
Height & Spread & Angle & Area & Thick & SLA & Nmass & Seed \\
  \hline
0.09 & -0.24 & -0.18 & -0.20 & -0.11 & -0.17 & 0.18 & -0.12 \\
   \hline
\end{tabular}
\caption{Main effect of the Aravo species traits estimated with LORI. The regularization parameter is tuned using QUT.}
\label{tbl:maincol}
\end{table}
The main effects of environment characteristics are given in \Cref{tbl:mainrow} and the main effects of the species traits in \Cref{tbl:maincol}. First we observe that overall, species traits have larger effects than environment characteristics on the observed abundances. In particular, the mass-based leaf nitrogen content (Nmass) has a large positive effect, which seems to indicate that plants with a large Nmass tend to be more abundant across all environments. On the other hand, the maximum lateral spread of clonal plants (Spread), area of single leaf (Area) leaf elevation angle estimated at the middle of the lamina (Angle) and specific leaf area (SLA) have large negative effects on the abundances.\\

The estimated rank of the interaction matrix $\hat\Theta$ (number of singular values above $10^{-6}$) is $2$. The environments (rows) and species (columns) can be visualized on a biplot \cite[Section 2.5]{deRooij2005}, where rows and columns are represented  simultaneously in a normalized Euclidean space. In such plots, the dimensions of the Euclidean space are given by the principal directions of $\hat\Theta$, scaled by the square root of the singular values of $\hat\Theta$. Figure \ref{aravo-biplot} shows such a display, which can be interpreted in terms of distance between points: a species and an environment that are close interact highly, and two species or two environments that are close have similar profiles. Justifications for such a distance interpretation can be found in \cite[Section 2.5]{deRooij2005} or \cite[Section 2]{JosseFithian2016}.
\begin{figure}
\begin{center}
\includegraphics[scale=0.5]{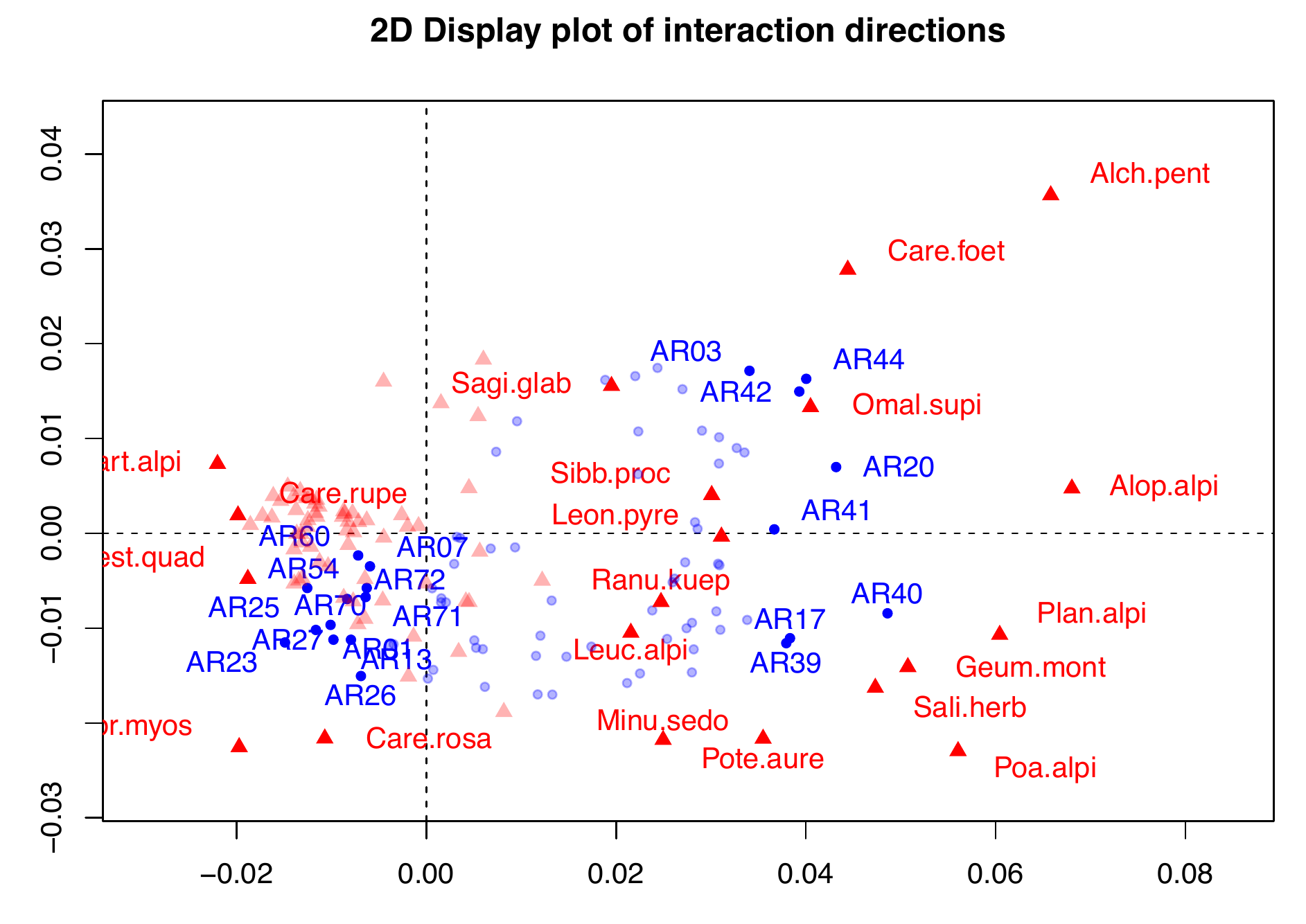}
 \caption{Display of the two first dimensions of interaction estimated with LORI. Environments are represented with blue points and species with red triangles.}
 \label{aravo-biplot}
\end{center}
\end{figure}
We can then look at the relations between the known traits and the interaction directions of $\hat \Theta$.
\Cref{spe-cov} shows that the two first directions of interaction are correlated with the species covariates; the correlation is particularly high for the Nmass and SLA variables. Thus, on \Cref{aravo-biplot}, the two directions separate the plants with large SLA and Nmass (top right corner) from those with small SLA and Nmass (bottom left corner). Then, \Cref{spe-cov} shows that the directions of interaction are also correlated with the environment covariates, and particularly with the mean snowmelt date (Snow). Thus, on \Cref{aravo-biplot}, the two directions separate the late melting environments (top right corner) from the early melting environments (bottom left corner).  
\begin{figure}
  \centering
      \begin{subfigure}[b]{0.4\textwidth}
            \centering
            \includegraphics[scale=0.5]{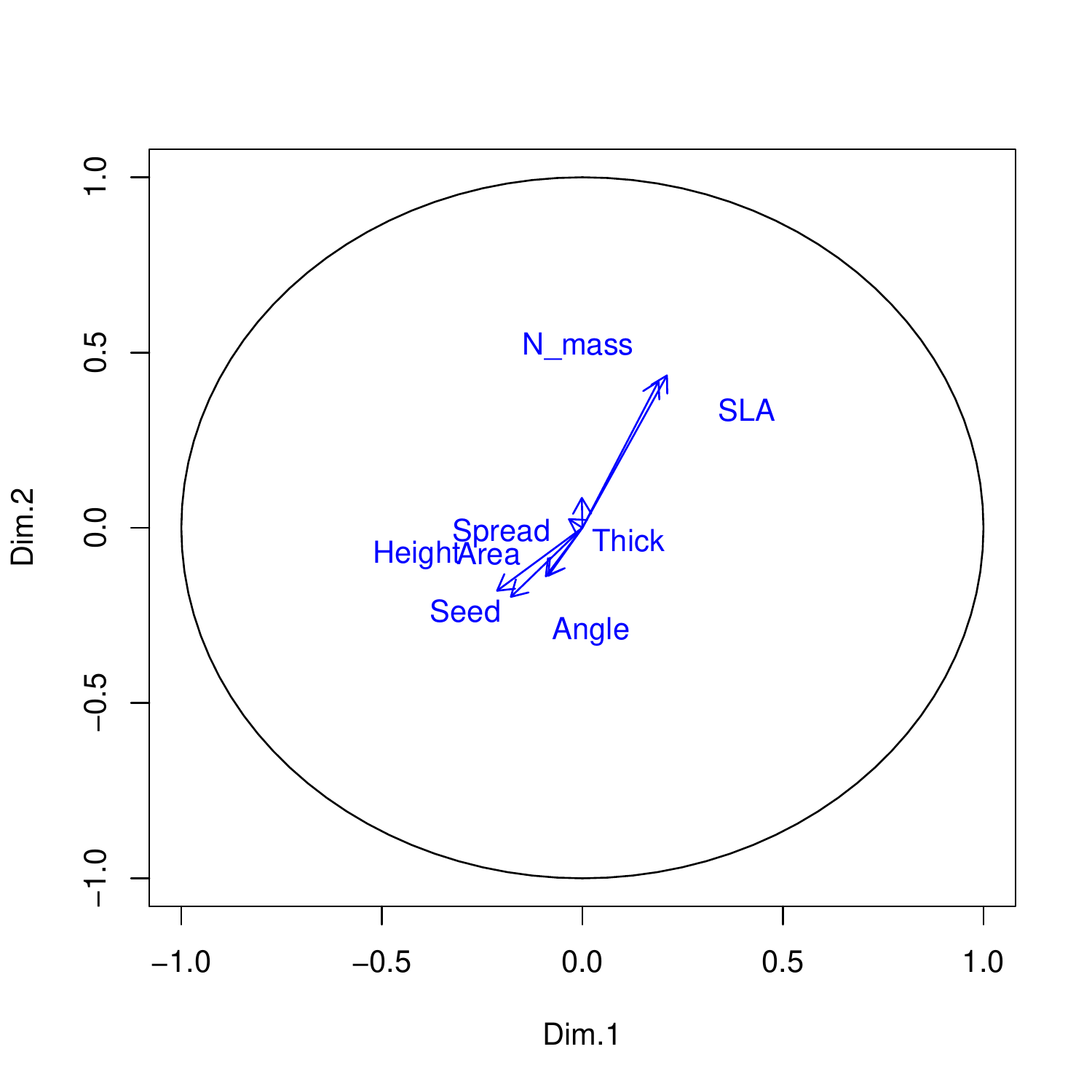}
            \caption{Environment covariates}
            \label{spe-cov}
    \end{subfigure}
\hspace{2cm}
       \begin{subfigure}[b]{0.4\textwidth}
            \centering
            \includegraphics[scale=0.5]{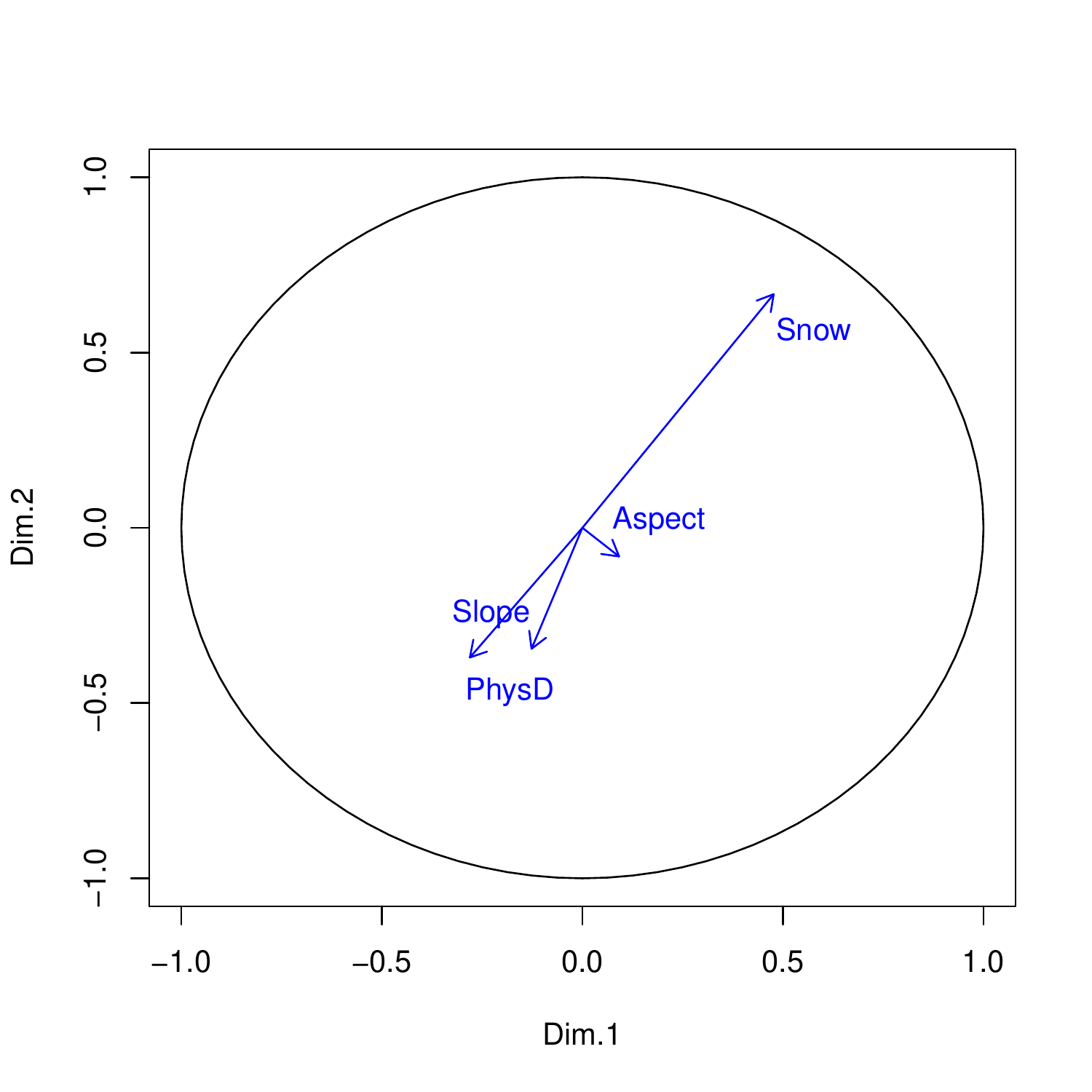}
            \caption{Species traits}
            \label{env-cov}
    \end{subfigure}  \caption{Correlation between the two first dimensions of interaction and the covariates (the covariates are not used in the estimation).}
  \label{aravo-cor}
\end{figure}
Combining the interpretation of \Cref{aravo-biplot}, \Cref{spe-cov} and \Cref{env-cov}, we deduce that plants with large Nmass and SLA interact highly with late melting sites (large value of Snow). This was in fact the main result obtained in the original study \citet{aravo} (see, e.g., the summary of findings in the abstract), which advocates the good properties of LORI in terms of interpretation. \\

\section{Using covariates to impute ecological data}
\label{eco-data}

The water-birds data count the abundance of migratory water-birds in 785 wetland sites (across the 5 countries in North Africa),  between 1990 and 2017 \citep{SAYOUD201711}. One of the objectives is to assess the effect of time on species abundances, to monitor the populations and assess wetlands conservation policies. Ornithologists have also recorded side information concerning the sites and years, which may influence the counts. For instance, meteorological anomalies, latitude and longitude.
The count table contains a large amount of missing entries (70\%), but the covariate matrices which contain respectively 6 covariates about the 785 sites and 8 covariates about the 18 years, are fully observed.
Our method allows to take advantage of the available covariates to provide interpretation for spatio-temporal patterns. As a by-product, it produces an imputed contingency table. \\

\Cref{tbl:main-site-mil,tbl:main-year-mil} show the estimated main effects of some of the sites and years characteristics. Sites with large latitudes are associated to smaller counts, as well as sites which are far from the coast. Sites which are located far from towns, and sites with large water surfaces are associated to larger counts. The four year covariates given in \Cref{tbl:main-year-mil} concern meteorological anomalies. The associated coefficients are all negative, indicating that more important anomalies are associated to smaller abundances.

\begin{table}
\centering
\begin{tabular}{rrrrr}
  \hline
Agricultural surface & Latitude & Dist. to town & Dist. to coast & Surface \\
  \hline
0.09 &-0.21 & 0.09 & -0.48 & 0.20 \\
   \hline
\end{tabular}
\caption{Main effect of the sites characteristics estimated with LORI. The regularization parameter is tuned using QUT.}
\label{tbl:main-site-mil}
\end{table}

\begin{table}
\centering
\begin{tabular}{rrrrrrrr}
  \hline
Spring N/O & Spring N/E & Winter S/O& Winter S/E\\
  \hline
-0.05 & -0.03 & -0.05 & -0.03 \\
   \hline
\end{tabular}
\caption{Main effect of the years characteristics estimated with LORI. }
\label{tbl:main-year-mil}
\end{table}
The sites and years can also be displayed using the same visual tools as described in \Cref{real-data}. \Cref{fig:waterbirds-cov-sites} and \ref{fig:waterbirds-cov-years} show the correlations between the covariates and the directions of interaction. On the two-dimensional display on \Cref{fig:waterbirds-scatter}, the first dimension is correlated with geographical characteristics, and the second dimension with meteorological anomalies. We observe a very clear temporal gradient along the second dimension, indicating that over time, meteorological anomalies increase (in the sense of a summary anomaly variable embodied by the second direction). One of the site (378) lays out of the point cloud, and corresponds to a site with very large surface. \\
\begin{figure}[!ht]
\centering
\begin{subfigure}[t]{0.45\textwidth}
\centering
\includegraphics[scale=0.4]{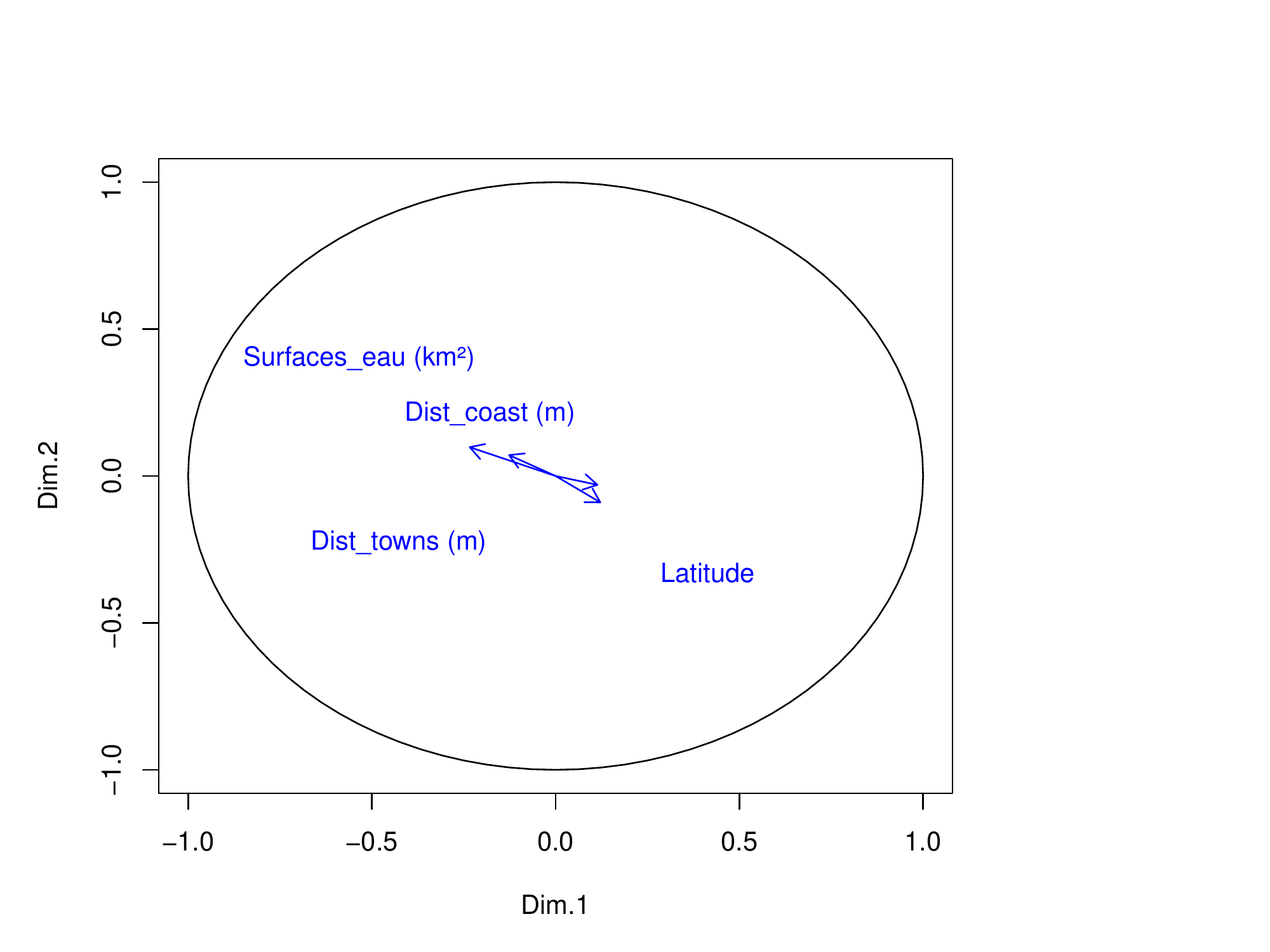}
\caption{Correlation between sites characteristics and the two first directions of interaction.}
\label{fig:waterbirds-cov-sites}
\end{subfigure}
\hspace{0.8cm}
\begin{subfigure}[t]{0.45\textwidth}
\centering
\includegraphics[scale=0.4]{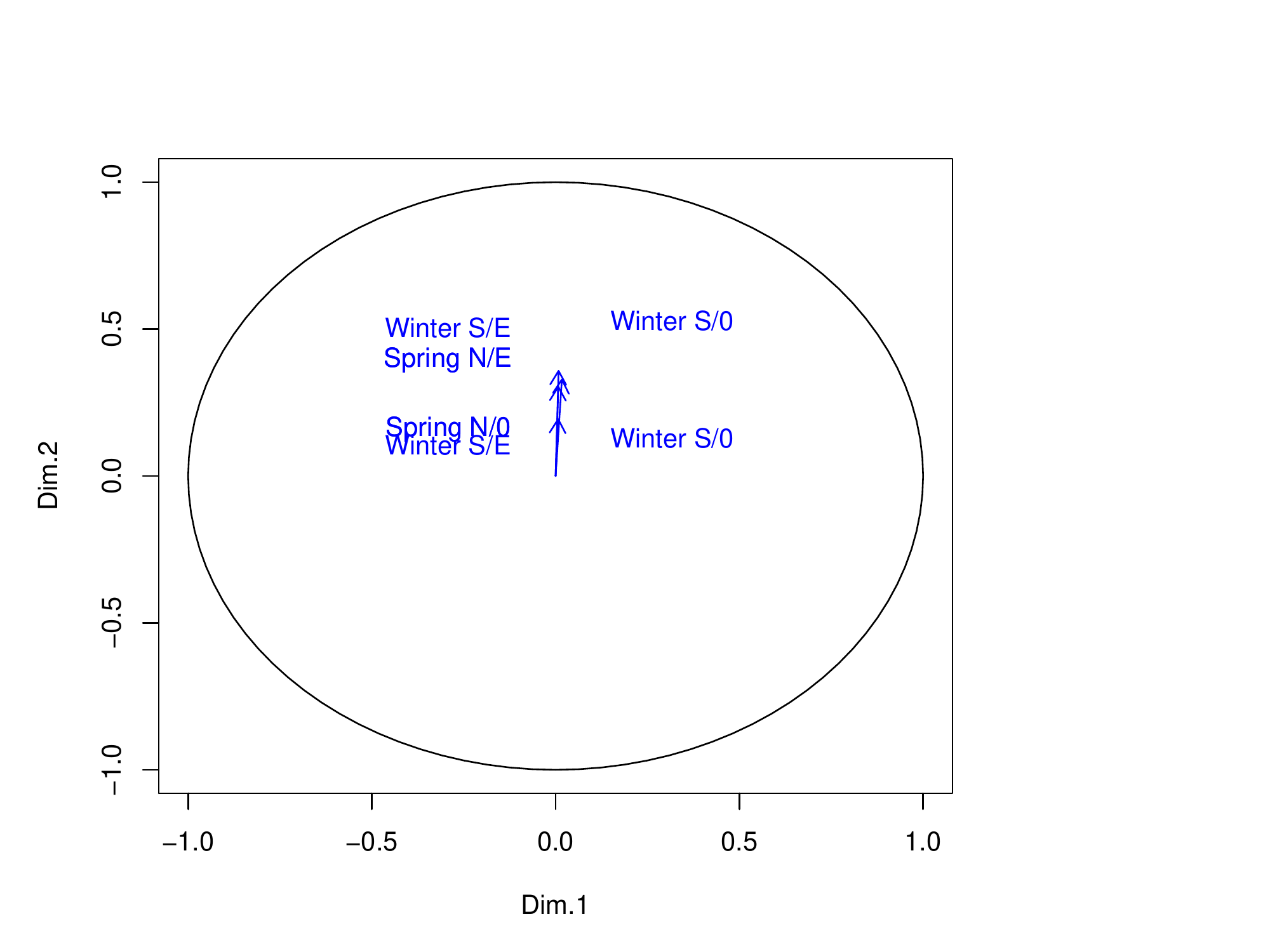}
\caption{Correlation between year characteristics and the two first directions of interaction.}
\label{fig:waterbirds-cov-years}
\end{subfigure}
\begin{subfigure}[t]{0.8\textwidth}
\centering
\includegraphics[scale=0.5]{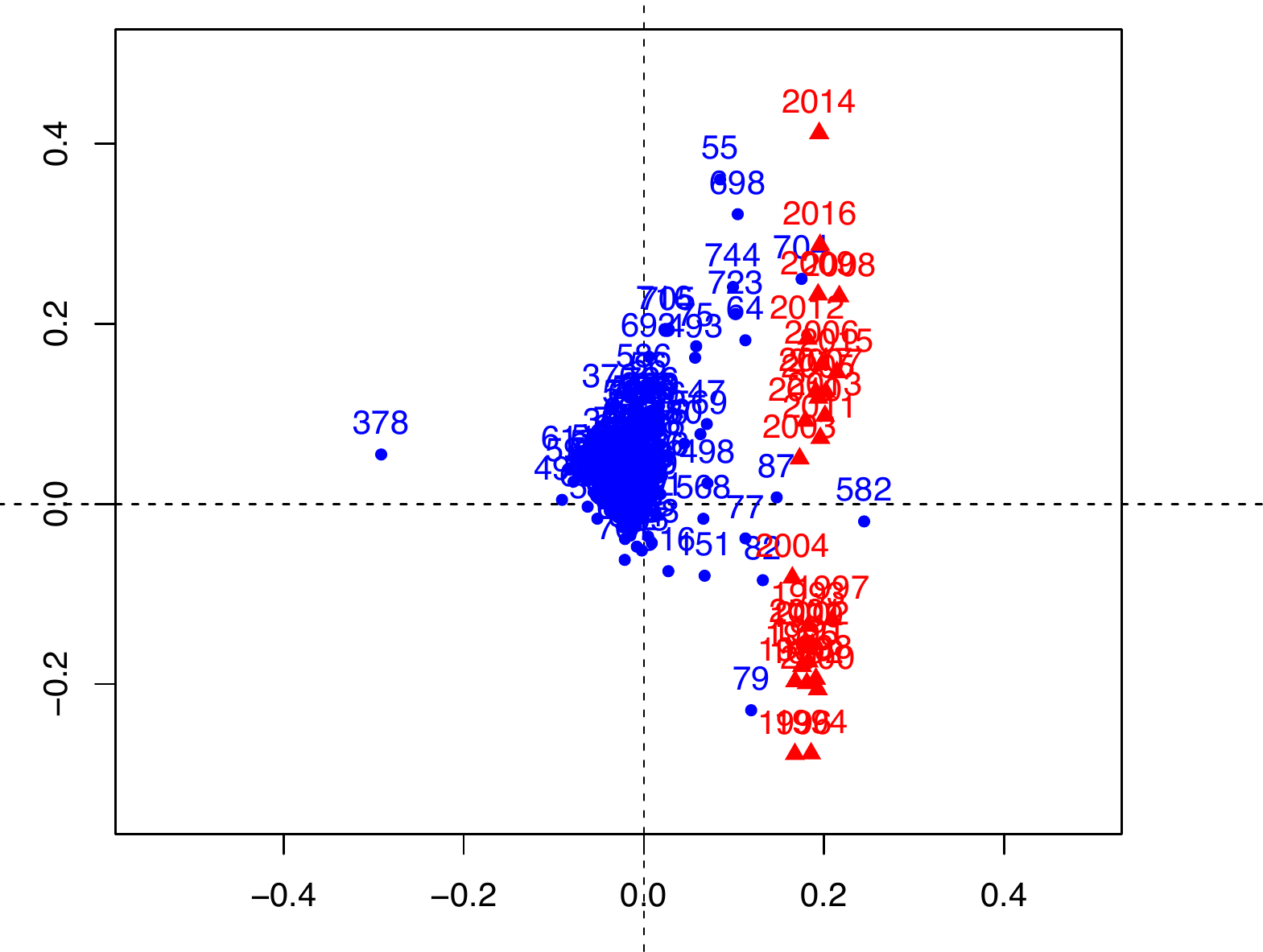}
\caption{Display of the two first dimensions of interaction estimated with LORI. Environments are represented with blue points and years with red triangles.}
\label{fig:waterbirds-scatter}
\end{subfigure}
\caption{Visual display of LORI results for the water-birds data.}
\label{fig:waterbirds-decomp}
\end{figure}

LORI also returns counts estimates, which can be used to compute an estimation of the total yearly abundances (i.e. counts estimates summed across sites). To better assess the temporal trend, one can decompose the estimated counts into three factors corresponding to the site effects, year effects and interactions respectively. Indeed, for $(i,j)\in\{1,\ldots,n\}\times\{1,\ldots,p\}$, one can write $$\exp(\hat X_{ij}) = \exp(\hat\mu)\exp(R_{i,.}\hat\alpha)\exp(C_{j,.}\hat\beta)\exp(\hat\Theta_{ij}).$$ \Cref{fig:waterbirds-decomp} shows the last three factors of this decomposition separately. \\
\begin{figure}[!ht]
\centering
\begin{subfigure}[t]{0.45\textwidth}
\centering
\includegraphics[scale=0.4]{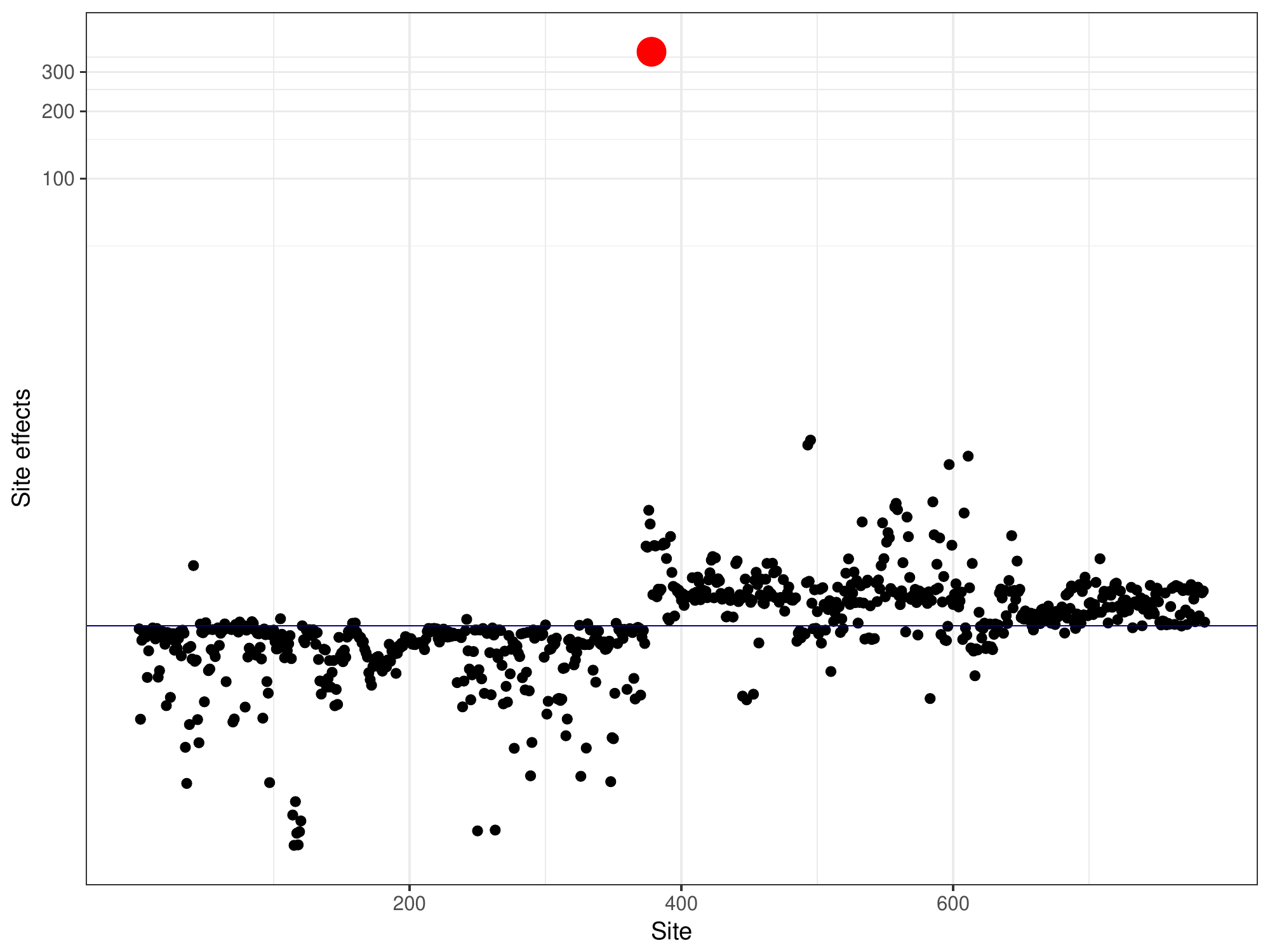}
\caption{Total site effects in count scale ($\exp(R_{i,}\hat\alpha)$, for $1\leq i\leq n$). One site has $\exp(R_{i,}\hat\alpha)\geq \exp(5)$ (large red point). Horizontal line: $\exp(R_{i,}\alpha)= 1$.}
\label{fig:waterbirds-decomp-sites}
\end{subfigure}
\hspace{0.8cm}
\begin{subfigure}[t]{0.45\textwidth}
\centering
\includegraphics[scale=0.4]{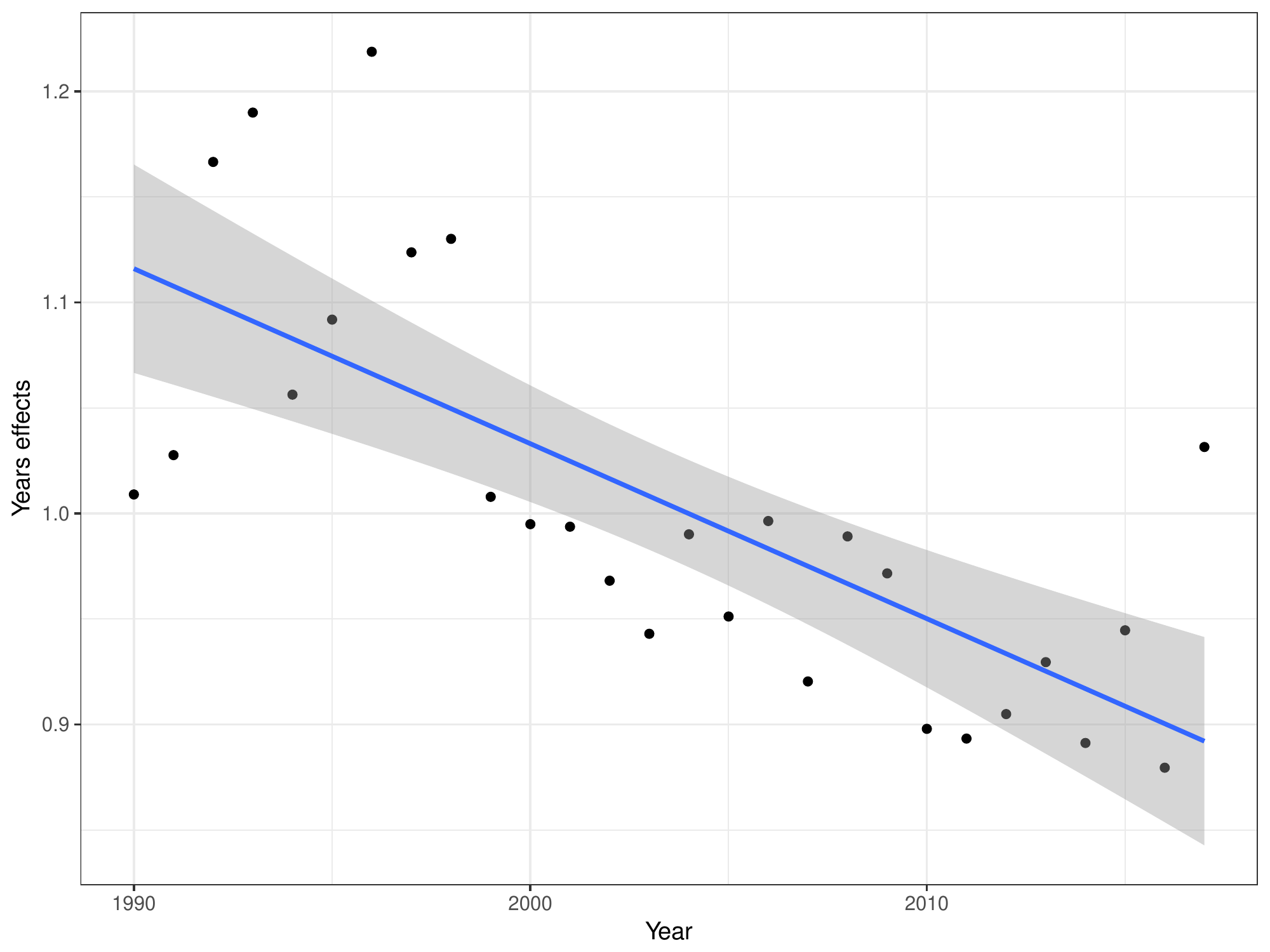}
\caption{Total year effects in count scale ($\exp(C_{j,}\hat\beta)$, for $1\leq j\leq p$). Blue line: loess (standard deviation in gray).}
\label{fig:waterbirds-decomp-years}
\end{subfigure}
\begin{subfigure}[t]{0.8\textwidth}
\centering
\includegraphics[scale=0.5]{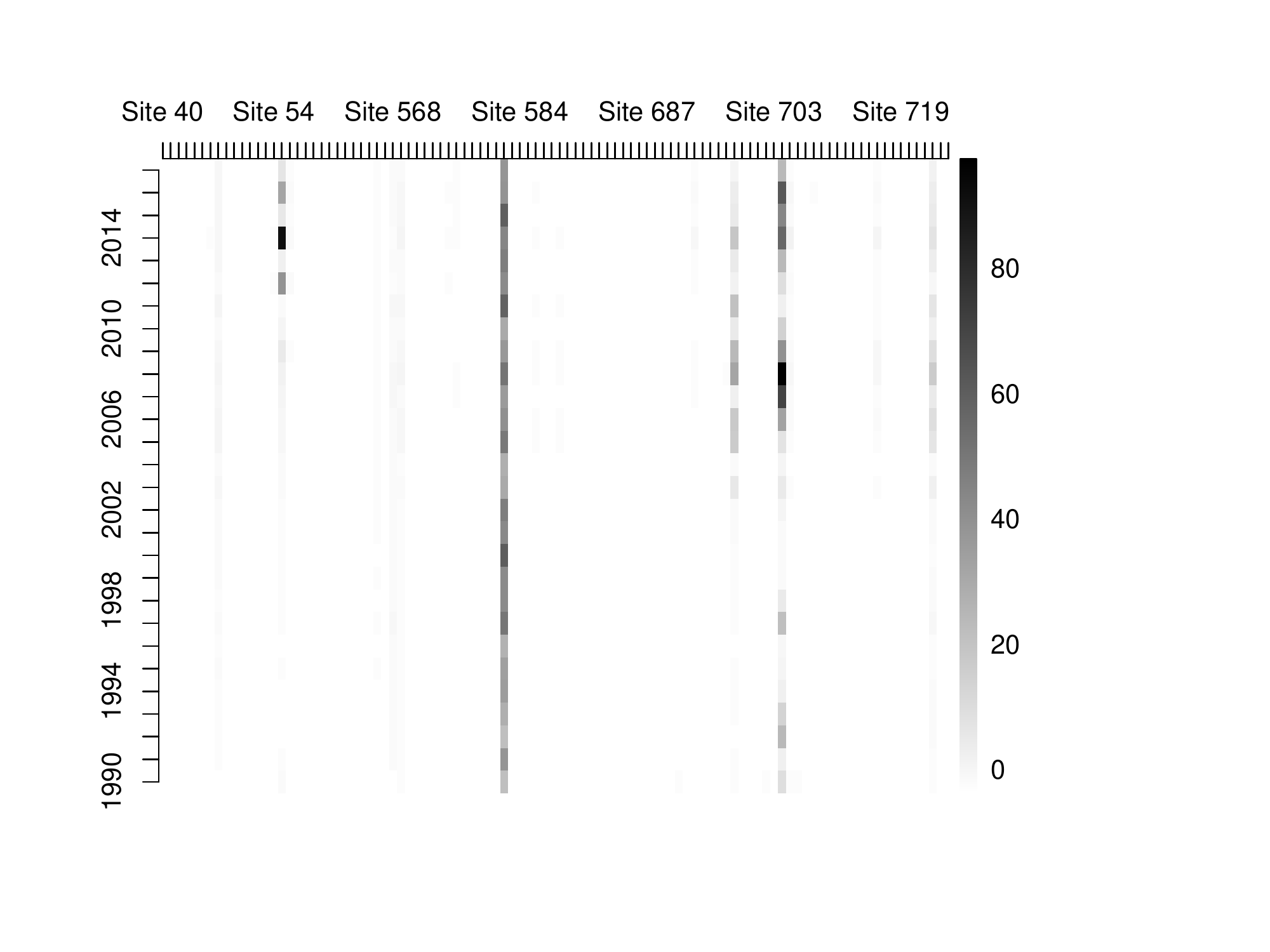}
\caption{Interaction in count scale ($\exp(\hat\Theta_{ij})$) for $30$ sites (to improve the display, the rest of the sites take value extremely close to $1$). }
\label{fig:waterbirds-decomp-inter}
\end{subfigure}
\caption{Decomposition of the estimated counts into multiplicative site effects (top left), year effects (top right) and interactions (bottom).}
\label{fig:waterbirds-decomp}
\end{figure}

On \Cref{fig:waterbirds-decomp-sites} we see that most sites have multiplicative effects around $1$ on count scale. One site (site 378, large red point) stands out; again, it corresponds to an extremely large site ($6000$km$^2$, $5$ times larger than the second, $300$ times larger than the mean). In this respect, the row effects act as normalization factors accounting for surface. We also observe tenuous levels along the $x$ axis, corresponding to sites of different countries. On \Cref{fig:waterbirds-decomp-years} we observe a decreasing temporal trend. This means that, all other things being equal, later years tend to produce smaller abundances. As illustrated in \Cref{fig:waterbirds-cov-years}, this temporal trend can be associated with the effects of meteorological anomalies. Note that the temporal effects (top right) are smaller in amplitude than the spatial effects (top left). Indeed, more variability is observed between sites in a given year, than between years for a given site.\\

Finally, looking at the interaction matrix on \Cref{fig:waterbirds-decomp-inter}, we see that the interaction is mainly driven by a few sites which interact more or less highly with every year. In particular the site $582$ presents large interactions with every year, and corresponds to a national park in Morocco and is the most abundant site ($4$ million birds in total, twice as much as the second most abundant, $120$ more than the mean). Here, the large abundance is not explained by the geographical covariates available (but maybe by other unmeasured factors such as protection legislations), and thus the extreme abundance is captured in the interaction rather than the main effects. This profile was also visible on \Cref{fig:waterbirds-scatter} where the site $582$ lies amongst the cloud of years, indicating large interactions through small Euclidean distances.\\
 The site 704 also presents large interactions, but they are not constant throughout the years. It corresponds to Ichkeul national park in Tunisia, which is a major site for most species; the abundances are very large in Ichkeul compared to other sites. However during several years including 2007, bad weather conditions prevented ornithologist to correctly count the birds, thus reported counts are significantly lower than expected. This explains the drop in the interaction in 2007 for Ichkeul, corresponding to an outlier behavior. Again, such a profile could not be highlighted without modeling interactions. This illustrates one of the advantages of LORI for such bird abundance data compared to state-of-the-art methods such as \citet{trim} which do not model interactions. In particular, in most cases the interaction terms absorb outlying values (small or large), and indirectly account for the over-dispersion which is known to occur in birds abundance data.

\section{Discussion}
\label{discussion}
We conclude by discussing some opportunities for further research.
 To select covariates, we could  penalize the main effects with an $\ell_1$ penalty on $\alpha$ and $\beta$. It may be also of interest to consider other sparsity inducing penalties. In particular, penalizing the Poisson log-likelihood by the absolute values of the coefficients of the interaction matrix $\Theta$ could possibly lead to solutions where some interactions are driven to $0$ and a small number of large interactions are selected.
Secondly, it would be useful to develop a multiple imputation procedure based on LORI, to provide confidence regions for the estimated parameters.
The properties of the thresholding test, which can be seen as an alternative to a chi-squared test for independence with covariates, also merit further investigation. In particular, the power could be assessed.
Finally, we could also explore whether our model could be extended to more complex models such as the zero-inflated negative binomial models.

\section*{Acknowledgments}
The authors thank Trevor Hastie, Edgar Dobriban, Olga Klopp, Kevin Bleakey and St\'{e}phane Dray for their very helpful comments on this manuscript. We would like to thank all the organizations involved in the Mediterranean Waterbirds Network (MNW) who provided the waterbird data set used in this article: the GREPOM/BirdLife Morocco, the Direction Générale des Forêts (Algeria), the  AAO/BirdLife Tunisia, the Libyan Society for Birds, the Egyptian Environment Affairs Agency, the national agency for wildlife and hunting management (ONCFS, France) and the Research Institute of the Tour du Valat (France). We are also grateful to all the field observers who participated in the North African IWC thus making this data set so rich, and to Pierre Defos du Rau and Laura Dami for their help.

\section{Proofs}
\subsection{Proof of \Cref{th:global-bound-1}}
\label{proof:global-bound-1}
We will first derive an upper bound for $\sum_{(i,j)\in\Omega}(\hat X_{ij} - X_{ij}^*)^2$, then control $\norm{\hat X - X^*}[F]^2$ by $\norm{\hat X - X^*}[F]^2\leq \sum_{(i,j)\in\Omega}(\hat X_{ij} - X_{ij}^*)^2 + \mathsf{D}$, with $\mathsf{D}$ a residual term defined later on. By definition of $\hat{X} = \hat{X_0} + \hat\Theta$, $
\mathcal{L}(\hat{X}) + \lambda\norm{\hat\Theta}[*]\leq \mathcal{L}(X^*) + \lambda\norm{\Theta^*}[*].$ Using the strong convexity of $\mathcal{L}$ and substracting $\langle\nabla\mathcal{L}(X^*),\hat X - X^*\rangle$ on both sides of this inequality, we obtain
\begin{equation}
\label{eq:first-bound}
\frac{\sigma_{-}^2\sum_{(i,j)\in\Omega}(\hat X_{ij}-X^*_{ij})^2}{2}\leq \underbrace{-\langle\nabla\mathcal{L}(X^*),\hat X - X^*\rangle}_{I}+\underbrace{\lambda(\norm{\Theta^*}[*]-\norm{\hat\Theta}[*])}_{II}.
\end{equation}
We will bound separately the two terms on the right hand side of \eqref{eq:first-bound}. \\

Given a matrix $X\in \mathbb{R}^{n\times p}$, we denote $\mathcal{S}_1(X)$ (resp. $\mathcal{S}_2(X)$) the span of left (resp. right) singular vectors of $X$. Let $P_{\mathcal{S}_1(X)}^{\perp}$ (resp.  $P_{\mathcal{S}_2(X)}^{\perp}$) be the orthogonal projector in $\mathbb{R}^n$ on $\mathcal{S}_1(X)^{\perp}$ (resp. in $\mathbb{R}^p$ on $\mathcal{S}_2(X)^{\perp}$). We define the projection operator in $\mathbb{R}^{n\times p}$ $\mathcal{P}_{X}^{\perp}: \tilde{X} \mapsto P_{\mathcal{S}_1(X)}^{\perp}\tilde{X}P_{\mathcal{S}_2(X)}^{\perp}$, and $\mathcal{P}_{X}: \tilde{X} \mapsto \tilde{X}-P_{\mathcal{S}_1(X)}^{\perp}\tilde{X}P_{\mathcal{S}_2(X)}^{\perp}$. We use the following Lemma, proved in \cite[Lemma 16]{Lafond2015}.
\begin{lemma} \label{lemma:l1}
For all $M$ and $M'$ in $\mathbb{R}^{n\times p}$,
\begin{enumerate}[(i)]
\item \label{lemma:l1:i} $\norm{M+ \mathcal{P}_{M}^{\perp}(M)}[*]=\norm{M}[*] + \norm{\mathcal{P}_{M}^{\perp}(M)}[*],$
\item \label{lemma:l1:ii} $\norm{M}[*]-\norm{M'}[*]\leq  \norm{\mathcal{P}_{M}(M-M') }[*]-\norm{\mathcal{P}_{M}^{\perp}(M-M')}[*],$
\item \label{lemma:l1:iii} $\norm{\mathcal{P}_{M}(M-M')}[*]\leq \sqrt{2\mathrm{rk}(M)}\norm{M-M'}[F].$
\end{enumerate}
\end{lemma}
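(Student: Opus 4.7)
The plan is to prove the three identities in sequence, since (ii) relies on (i) and (iii) is largely independent. Throughout I will write $M = U_M \Sigma_M V_M^{\top}$ for the compact SVD and use the dual characterisation $\norm{X}[*] = \sup_{\norm{Y}\leq 1}\langle X,Y\rangle$ of the nuclear norm, together with the dual-feasibility fact that $U_M V_M^{\top}$ attains this supremum for $X = M$.

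For (i), the key observation is that $\mathcal{P}_{M}^{\perp}(N) = P_{\mathcal{S}_1(M)}^{\perp} N P_{\mathcal{S}_2(M)}^{\perp}$ has left (resp.\ right) singular vectors lying in $\mathcal{S}_1(M)^{\perp}$ (resp.\ $\mathcal{S}_2(M)^{\perp}$), while $M$ itself has singular vectors in $\mathcal{S}_1(M)$ and $\mathcal{S}_2(M)$. The $\leq$ direction is just the triangle inequality. For $\geq$, I would form a dual witness by adding a witness $U_M V_M^{\top}$ for $M$ to a witness $U_N V_N^{\top}$ coming from the SVD of $\mathcal{P}_M^{\perp}(N)$. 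Because the column spaces of $U_M$ and $U_N$ are orthogonal and likewise for the rows, the sum $Y := U_M V_M^{\top} + U_N V_N^{\top}$ is a partial isometry, so $\norm{Y}\leq 1$; evaluating $\langle M + \mathcal{P}_M^{\perp}(N), Y\rangle$ picks out exactly $\norm{M}[*] + \norm{\mathcal{P}_M^{\perp}(N)}[*]$ thanks to the same orthogonality cancelling the cross terms. Verifying that $Y$ really has operator norm at most one is the main technical point, and it is the step I expect to be the real obstacle: it boils down to checking that, after extending $U_M,U_N$ (resp.\ $V_M,V_N$) to full orthonormal bases, $Y$ becomes block-diagonal with identity blocks.

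Part (ii) is then an algebraic consequence of (i). I would write $M' = M - \mathcal{P}_M(M-M') - \mathcal{P}_M^{\perp}(M-M')$, apply the triangle inequality in the reverse direction
\begin{equation*}
\norm{M'}[*] \geq \norm{M - \mathcal{P}_M^{\perp}(M-M')}[*] - \norm{\mathcal{P}_M(M-M')}[*],
\end{equation*}
and then use (i) applied to the matrix $-\mathcal{P}_M^{\perp}(M-M') = \mathcal{P}_M^{\perp}(M'-M)$ (which has the same nuclear norm as $\mathcal{P}_M^{\perp}(M-M')$) to replace the first term by $\norm{M}[*] + \norm{\mathcal{P}_M^{\perp}(M-M')}[*]$. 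Rearranging yields (ii) immediately.

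Part (iii) is a rank count followed by the inequality $\norm{A}[*]\leq \sqrt{\rank(A)}\,\norm{A}[F]$. I would expand
\begin{equation*}
\mathcal{P}_M(N) = N - P_{\mathcal{S}_1(M)}^{\perp} N P_{\mathcal{S}_2(M)}^{\perp} = P_{\mathcal{S}_1(M)} N + P_{\mathcal{S}_1(M)}^{\perp} N P_{\mathcal{S}_2(M)},
\end{equation*}
observe that each summand has rank at most $\rank(M)=r$ (the first because its left factor has rank $r$, the second because its right factor does), hence $\rank(\mathcal{P}_M(N))\leq 2r$. Applying the standard Cauchy–Schwarz bound to $A = \mathcal{P}_M(M-M')$ and then using that $\mathcal{P}_M$ is an orthogonal projector in the Frobenius inner product (so $\norm{\mathcal{P}_M(M-M')}[F]\leq \norm{M-M'}[F]$) delivers the stated $\sqrt{2r}$ factor. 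This last part is routine; only (i) requires genuine care.
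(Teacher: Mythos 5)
Your proof is correct and follows essentially the same route as the source the paper cites for this lemma (Lafond 2015, Lemma 16 --- the paper gives no proof of its own): additivity of the nuclear norm for matrices with mutually orthogonal row and column spaces via a partial-isometry dual witness for (i), the reverse triangle inequality plus (i) for (ii), and a rank count combined with $\norm{A}[*]\leq\sqrt{\rank(A)}\norm{A}[F]$ and the Frobenius non-expansiveness of $\mathcal{P}_M$ for (iii). You also implicitly (and rightly) read (i) as a statement about $\mathcal{P}_M^{\perp}$ applied to a \emph{second} matrix $N$, which is the version actually needed; as literally printed, $\mathcal{P}_M^{\perp}(M)=0$ and (i) is vacuous, so this is a typo in the statement rather than a gap in your argument.
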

Using $|\langle\nabla\mathcal{L}(X^*),\hat X - X^*\rangle|\leq\norm{ \hat X - X^* }[*]\norm{ \nabla\mathcal{L}(X^*)}$ and the triangular inequality gives that
\begin{equation}
\label{eq:left-bound}
I \leq  \norm{\nabla\mathcal{L}(X^*)} \left(\norm{ \mathcal{P}_{\Theta^*}(\hat{\Theta}-\Theta^*)}[*]+\norm{\mathcal{P}_{\Theta^*}^{\perp}(\hat{\Theta}-\Theta^*)}[*]+
 \norm{\hat{X}_0-X^*_0}[*]\right).
\end{equation}
Then, \Cref{lemma:l1}~\ref{lemma:l1:ii} applied to $\hat{\Theta}$ and $\Theta^*$, results in
\begin{equation}
\label{eq:right-bound}
II\leq \lambda\left(\|\mathcal{P}_{\Theta^* }(\hat{\Theta} -\Theta^* )  \|_{*}-\|\mathcal{P}_{\Theta^* }^{\perp}(\hat{\Theta} -\Theta^* ) \|_{*}\right).
\end{equation}
Plugging inequalities \eqref{eq:left-bound} and \eqref{eq:right-bound} in \eqref{eq:first-bound} we obtain
\begin{multline}
\label{eq:one}
\sigma_{-}^2\sum_{(i,j)\in\Omega}(\hat X_{ij}-X^*_{ij})^2\leq 2(\lambda + \norm{\nabla\mathcal{L}(X^*)})\left\| \mathcal{P}_{\Theta^*}(\hat{\Theta}-\Theta^*)\right\|_{*}\\
+2(\norm{\nabla\mathcal{L}(X^*)}-\lambda)\norm{\mathcal{P}_{\Theta^* }^{\perp}(\hat{\Theta} -\Theta^* )}[*]+ 2\norm{\nabla\mathcal{L}(X^*)}\norm{\hat X_0-X^*_0}[*].
\end{multline}
We now use the condition $\lambda\geq 2\norm{\nabla\mathcal{L}(X^*)}$ in \eqref{eq:one}:
\begin{equation}
\label{eq:two}
\sigma_{-}^2\sum_{(i,j)\in\Omega}(\hat X_{ij}-X^*_{ij})^2\leq 3\lambda \norm{ \mathcal{P}_{\Theta^*}(\hat{\Theta}-\Theta^*)}[*]
+ \lambda\norm{\hat X_0-X^*_0}[*].
\end{equation}
Then, $\text{rk}(\hat X_0-X^*_0)\leq r$ and $\norm{\hat X_0 - X^*_0}[F]\leq \norm{\hat X - X^*}[F]$ imply that $\norm{\hat X_0-X^*_0}[*]\leq \sqrt{r} \norm{X^*-\hat{X}}[F]$, which together with \Cref{lemma:l1}~\ref{lemma:l1:iii} and $\norm{\hat \Theta - \Theta^*_0}[F]\leq \norm{\hat X - X^*}[F]$ yields
\begin{equation}
\label{eq:lambda-bound}
\sigma_{-}^2\sum_{(i,j)\in\intervi{n}\times\intervi{p}}\omega_{ij}(\hat X_{ij}-X^*_{ij})^2\leq \lambda\left(3\sqrt{2\operatorname{rank}(\Theta^*)}+\sqrt{r}\right)\norm{\hat X-X^*}[F].
\end{equation}
We now derive the upper bound $\norm{\hat X - X^*}[F]^2\leq \sum_{(i,j)\in\intervi{n}\times\intervi{p}}\omega_{ij}(\hat X_{ij} - X_{ij}^*)^2 + \mathsf{D}$. Define $\eta = 72\log(n+p)/(\pi\log(6/5))$,
\begin{equation}
\label{eq:sigma-omega-x}
\Sigma(\omega, X) = \sum_{(i,j)\in\intervi{n}\times\intervi{p}}\omega_{ij}X_{ij}^2
\end{equation}
 and the set
\begin{equation}
\label{eq:set-C}
\mathcal{C}(\eta, \rho) = \left\{X\in\mathbb{R}^{n\times p}; \norm{X}[\infty]\leq 1, \norm{X}[*]\leq \sqrt{\rho}\norm{X}[F], \mathbb{E}\left[\Sigma(\omega, X)\right] >\eta\right\}.
\end{equation}
We start by showing in the following Lemma that whenever $\hat{X} - X^*$ belongs to $\mathcal{C}(\eta,\rho)$ (for $\rho$ and $\mathsf{D}$ defined later on), a restricted strong convexity property of the form $\norm{\hat X - X^*}[F]^2\leq \sum_{(i,j)\in\intervi{n}\times\intervi{p}}\omega_{ij}(\hat X_{ij} - X_{ij}^*)^2 + \mathsf{D}$ holds. Define
\begin{equation}
\label{eq:varsigma}
\varsigma= 96\pi^{-1}[\rho(\mathbb{E}\norm{\Sigma_R})^2+8].
\end{equation}
\begin{lemma}\label{lemma:rsc}
Let $\eta = 72\log(n+p)/(\pi\log(6/5))$ and $\rho>0$. With probability at least $1-8(n+p)^{-1}$, for all $X\in\mathcal{C}(\eta,\rho)$ we get
\begin{equation*}
\left|\Sigma(\omega, X)-\mathbb{E}\left[\Sigma(\omega, X)\right]\right|\leq 
\frac{\mathbb{E}\left[\Sigma(\omega, X)\right]}{2} + \varsigma,
\end{equation*}
with $\Sigma_R$ defined in \eqref{eq:def-SigmaR}.
\end{lemma}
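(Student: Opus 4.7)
The argument is a peeling decomposition along the levels of $\mathbb{E}[\Sigma(\omega, X)]$, combined with symmetrization and a Talagrand-type concentration inequality applied to each slice. Set $\alpha = 6/5$, chosen so that $\log\alpha$ matches the denominator of $\eta$. For each integer $\ell\ge 0$, introduce the slice
\[
\mathcal{C}_\ell = \left\{X\in\mathcal{C}(\eta,\rho) : \mathbb{E}[\Sigma(\omega,X)] \le \alpha^{\ell+1}\eta\right\},\quad Z_\ell = \sup_{X\in\mathcal{C}_\ell}\left|\Sigma(\omega,X) - \mathbb{E}[\Sigma(\omega,X)]\right|.
\]
Since $\mathbb{E}[\Sigma(\omega,X)] = \sum_{ij}\pi_{ij}X_{ij}^2 \ge \pi\|X\|_F^2$ and $\|X\|_\infty \le 1$ on $\mathcal{C}(\eta,\rho)$, every $X \in \mathcal{C}_\ell$ obeys $\|X\|_F^2 \le \alpha^{\ell+1}\eta/\pi$.

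First I would control $\mathbb{E} Z_\ell$. Standard symmetrization over the Bernoulli indicators $\omega_{ij}$ gives $\mathbb{E} Z_\ell \le 2\,\mathbb{E}\sup_{X\in\mathcal{C}_\ell}|\sum_{i,j}\epsilon_{ij}\omega_{ij}X_{ij}^2|$. Because $t\mapsto t^2$ is $2$-Lipschitz on $[-1,1]$ and vanishes at $0$, the Ledoux--Talagrand contraction principle reduces the right-hand side to $8\,\mathbb{E}\sup_{X\in\mathcal{C}_\ell}|\langle\Sigma_R,X\rangle|$. Duality between operator and nuclear norms together with $\|X\|_*\le\sqrt{\rho}\|X\|_F$ and the Frobenius bound above then give, via Young's inequality,
\[
\mathbb{E} Z_\ell \le 8\,\mathbb{E}\|\Sigma_R\|\sqrt{\rho\alpha^{\ell+1}\eta/\pi} \le \tfrac{1}{8}\alpha^{\ell+1}\eta + C_0\,\rho\,(\mathbb{E}\|\Sigma_R\|)^2/\pi,
\]
for a numerical constant $C_0$.

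Next, each summand $\omega_{ij}X_{ij}^2 - \mathbb{E}[\omega_{ij}X_{ij}^2]$ is bounded by $1$ in absolute value and has variance at most $\mathbb{E}[\omega_{ij}X_{ij}^2]$ (using $X_{ij}^4 \le X_{ij}^2$), so the total variance is at most $\alpha^{\ell+1}\eta$ uniformly over $\mathcal{C}_\ell$. Bousquet's form of Talagrand's empirical-process inequality then yields, for every $t>0$ and with probability at least $1-e^{-t}$,
\[
Z_\ell \le 2\,\mathbb{E} Z_\ell + c_1\sqrt{\alpha^{\ell+1}\eta\, t} + c_2 t.
\]
Taking $t_\ell = \log(n+p) + \ell$, the prescription $\eta = 72\log(n+p)/\{\pi\log(6/5)\}$ is precisely large enough that the variance and linear-$t$ contributions fit inside $\tfrac{1}{4}\alpha^\ell\eta + \tfrac{1}{2}\varsigma$, so that on an event of probability at least $1-e^{-t_\ell}$ one obtains $Z_\ell \le \tfrac{1}{2}\alpha^\ell\eta + \varsigma$.

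Finally, any $X\in\mathcal{C}(\eta,\rho)$ lies in a unique slice $\mathcal{C}_\ell$ with $\alpha^\ell\eta < \mathbb{E}[\Sigma(\omega,X)]$, hence $\tfrac{1}{2}\alpha^\ell\eta \le \tfrac{1}{2}\mathbb{E}[\Sigma(\omega,X)]$, which gives the stated bound. A union bound over $\ell \ge 0$ with $\sum_{\ell\ge 0} e^{-t_\ell} \le 2(n+p)^{-1}$ delivers the claimed failure probability $8(n+p)^{-1}$ after adjusting the numerical constants in $t_\ell$. The principal difficulty is the joint calibration of $\alpha$, $t_\ell$ and $\eta$ so that $\sqrt{\alpha^{\ell+1}\eta\, t_\ell}$ stays below $\alpha^\ell\eta/4$ while the linear $t_\ell$ term and the Rademacher contribution $C_0\rho(\mathbb{E}\|\Sigma_R\|)^2/\pi$ both fit inside $\varsigma$; the contraction step is what makes the arithmetic work, by reducing the quadratic form in $X$ to a linear empirical process whose expected supremum scales as $\|X\|_F$ rather than $\|X\|_F^2$, leaving exactly the room needed for Young's inequality to produce the leading factor $1/2$.
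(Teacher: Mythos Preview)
Your plan is essentially the paper's argument: peel $\mathcal{C}(\eta,\rho)$ into geometric slices in the level $\mathbb{E}[\Sigma(\omega,X)]$ with ratio $6/5$, bound the expected supremum on each slice by symmetrization, the Ledoux--Talagrand contraction principle and trace duality $|\langle\Sigma_R,X\rangle|\le\|\Sigma_R\|\,\|X\|_*$, apply a Talagrand-type concentration inequality on each slice, and take a union bound. The only structural difference is that the paper invokes the convex--Lipschitz form of Talagrand (showing $(\omega_{ij})\mapsto Z_T$ is $\sqrt{T/\pi}$-Lipschitz on $[0,1]^{np}$) rather than Bousquet's variance form; both are legitimate here.

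One calibration point deserves correction. With your choice $t_\ell=\log(n+p)+\ell$, the Bernstein-type linear term $c_2t_\ell$ is unbounded in $\ell$ and therefore cannot be placed inside the \emph{fixed} quantity $\varsigma$ as you claim. It can, however, be absorbed into the level-dependent budget $\tfrac12\alpha^\ell\eta$, since $\eta\propto\log(n+p)/\pi$ and $\alpha^\ell$ eventually dominates $\ell$; with this reallocation the arithmetic still closes. The paper avoids the issue altogether: the Lipschitz form produces a deviation term $16L=16\sqrt{T/\pi}$ that already scales with the level $T=\alpha^\ell\eta$, and the elementary split $\sqrt{T/\pi}\le T/192+48/\pi$ routes the growing piece into the coefficient $5T/12$ while only the constant $768/\pi$ enters $\varsigma$. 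A cleaner variant of your Bousquet route is to take $t_\ell$ proportional to $\pi\alpha^\ell\eta$ (e.g.\ $t_\ell=\pi\alpha^\ell\eta/72$), which makes both $c_1\sqrt{\alpha^{\ell+1}\eta\,t_\ell}$ and $c_2t_\ell$ small multiples of $\alpha^\ell\eta$ and keeps $\sum_\ell e^{-t_\ell}$ summable at the right rate.
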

\begin{proof}
Consider the event
\begin{equation*}
\mathcal{B}=\Bigg\{\sup_{X\in\mathcal{C}(\eta, \rho)}\left[\left|\Sigma(\omega, X)-\mathbb{E}\left[\Sigma(\omega, X)\right]\right|-
\frac{1}{2}\mathbb{E}\left[\Sigma(\omega, X)\right]\right] > \varsigma \Bigg\}.
\end{equation*}
Define also for $l\in\nset_*$
$$\mathcal{S}_l =  \left\{X\in \mathcal{C}(\eta, \rho); \kappa^{l-1}\eta < \mathbb{E}\left[\Sigma(\omega, X)\right] < \kappa^l\eta\right\},$$
for $\kappa=6/5$ and $\eta = 72\log(n+p)/(\pi\log(6/5))$. On $\mathcal{B}$, there exist $l\geq 1$ and $X\in\mathcal{C}(\eta,\rho)$ such that $X\in\mathcal{C}(\eta,\rho)\bigcap\mathcal{S}_l$, and
\begin{equation}
\label{eq:24}
\left|\Sigma(\omega, X)-\mathbb{E}\left[\Sigma(\omega, X)\right]\right|  > 
\frac{1}{2}\mathbb{E}\left[\Sigma(\omega, X)\right]+\varsigma >\frac{1}{2}\kappa^{l-1}\eta + \varsigma = \frac{5}{12}\kappa^l\eta + \varsigma.
\end{equation}
For $T>0$, define the set
$$\mathcal{C}(\eta, \rho, T) = \left\{X\in\mathcal{C}(\eta,\rho), \mathbb{E}\left[\Sigma(\omega, X)\right]\leq T\right\}$$
and the event 
$$\mathcal{B}_l = \left\{\sup_{X \in \mathcal{C}(\eta,\rho,\kappa^l\eta)}\left|\Sigma(\omega, X)-\mathbb{E}\left[\Sigma(\omega, X)\right]\right|>\frac{5}{12}\kappa^l\eta + \varsigma \right\}.$$
It follows from \eqref{eq:24} that $\mathcal{B}\subset \bigcup_{l=1}^{+\infty}\mathcal{B}_l$; thus, it is enough to estimate the probability of the events $\mathcal{B}_l$, $l\in\mathbb{N}$, and then apply the union bound. Such an estimation is given in the following Lemma, adapted from \citet{klopp:hal-01111757} (see Lemma 10). Define
\begin{equation}
\label{eq:def-Zt}
Z_T = \sup_{X\in \mathcal{C}(\eta, \rho, T)}\left|\Sigma(\omega, X)-\mathbb{E}\left[\Sigma(\omega, X)\right]\right|.
\end{equation}
\begin{lemma}
\label{lem:klopp2015}
Under the assumptions of \Cref{th:global-bound-1},
\begin{equation}
\label{eq:lemm:ZT}
\mathbb{P}\left( Z_T \geq \frac{5}{12}T + \varsigma\right)\leq 4\mathrm{e}^{-\pi T/72},
\end{equation}
where $\varsigma$ is defined in \eqref{eq:varsigma}.
\end{lemma}
\begin{proof} We use the following Talagrand's concentration inequality and a symmetrization argument. Recall the statement of Talagrand's concentration inequality. Let $f:[-1,1]^{m}\mapsto \mathbb{R}$ a convex Lipschitz function with Lipschitz constant L, $\Xi_1,\ldots,\Xi_m$ be independent random variables taking values in $[-1,1]$, and $Z:=f(\Xi_1,\ldots,\Xi_m)$. Then, for any $t\geq 0$, $\mathbb{P}(|Z-\mathbb{E}[Z]|\geq 16L + t)\leq 4\mathrm{e}^{-t^2/2L^2}$. For $\mathbf{x} = (x_{ij})$, $(i,j)\in\intervi{n}\times\intervi{p}$, we apply this result to the function $$f(\mathbf{x}) = \sup_{X\in \mathcal{C}(\eta, \rho, T)}\left|\sum_{(i,j)\in\intervi{n}\times\intervi{p}}(x_{ij}-\pi_{ij})X_{ij}^2 \right|,$$ which is Lipschitz with Lipschitz constant $\sqrt{\pi^{-1}T}$:
\begin{equation*}
\begin{aligned}
& \left|f(x_{11},\ldots,x_{np})-f(z_{11},\ldots,z_{np})\right|\\
& \quad = \left|\sup_{X\in \mathcal{C}(\eta, \rho, T)}\left|\sum_{(i,j)\in\intervi{n}\times\intervi{p}}(x_{ij}-\pi_{ij})X_{ij}^2\right|-\sup_{X\in \mathcal{C}(\eta, \rho, T)}\left|\sum_{(i,j)\in\intervi{n}\times\intervi{p}}(z_{ij}-\pi_{ij})X_{ij}^2\right| \right|\\
& \quad \leq \sup_{X\in \mathcal{C}(\eta, \rho, T)}\left|\left|\sum_{(i,j)\in\intervi{n}\times\intervi{p}}(x_{ij}-\pi_{ij})X_{ij}^2\right|-\left|\sum_{(i,j)\in\intervi{n}\times\intervi{p}}(z_{ij}-\pi_{ij})X_{ij}^2\right| \right|\\
&\quad \leq \sup_{X\in \mathcal{C}(\eta, \rho, T)}\left|\sum_{(i,j)\in\intervi{n}\times\intervi{p}}(x_{ij}-\pi_{ij})X_{ij}^2-\sum_{(i,j)\in\intervi{n}\times\intervi{p}}(z_{ij}-\pi_{ij})X_{ij}^2 \right|\\
&\quad \leq\sup_{X\in \mathcal{C}(\eta, \rho, T)}\left|\sum_{(i,j)\in\intervi{n}\times\intervi{p}}(x_{ij}-z_{ij})X_{ij}^2 \right|\\
&\quad \leq\sup_{X\in \mathcal{C}(\eta, \rho, T)}\sqrt{(i,j)\in\sum_{\intervi{n}\times\intervi{p}}\pi_{ij}^{-1}(x_{ij}-z_{ij})^2}\sqrt{\sum_{(i,j)\in\intervi{n}\times\intervi{p}}\pi_{ij}X_{ij}^4}\\
&\quad \leq\sup_{X\in \mathcal{C}(\eta, \rho, T)}\sqrt{\pi^{-1}}\sqrt{\sum_{(i,j)\in\intervi{n}\times\intervi{p}}(x_{ij}-z_{ij})^2}\sqrt{\sum_{(i,j)\in\intervi{n}\times\intervi{p}}\pi_{ij}X_{ij}^2}\\
& \quad \leq \sqrt{\pi^{-1}T}\sqrt{\sum_{(i,j)\in\intervi{n}\times\intervi{p}}(x_{ij}-z_{ij})^2},
\end{aligned}\\
\end{equation*}
where we have used $||a|-|b||\leq |a-b|$,$\norm{X}[\infty]\leq 1$ and $\mathbb{E}\left[\Sigma(\omega, X)\right]\leq T$. 
Thus, Talagrand's inequality and the identity $\sqrt{\pi^{-1}T}\leq T/(2\times 96)+96/(2\pi)$ give
$$\mathbb{P}\left(Z_T\geq \mathbb{E}(Z_T)+768\pi^{-1}+\frac{1}{12}T+t \right)\leq 4\mathrm{e}^{-t^2\pi/2T}.$$
Taking $t = T/6$ we get 
\begin{equation}
\label{eq:talagr}
\mathbb{P}\left(Z_T\geq \mathbb{E}(Z_T)+768\pi^{-1}+\frac{3}{12}T\right)\leq 4\mathrm{e}^{-\pi T/72}.
\end{equation}
Now we bound the expectation $\mathbb{E}[Z_T]$ using a symmetrization argument \citep[Section 7.2]{Ledoux2001}. Let~$(\epsilon_{ij})$ be an i.i.d.~Rademacher sequence. We have
\begin{equation}
\mathbb{E}(Z_T) \leq 2\mathbb{E}\left(\sup_{X\in\mathcal{C}(\eta,\rho,T)}\left|\sum_{(i,j)\in\intervi{n}\times\intervi{p}}\epsilon_{ij}\omega_{ij}X_{ij}^2\right| \right),
\end{equation}
Then, the contraction inequality (see \citet{Koltchinskii2011}, Theorem 2.2) yields 
$$\mathbb{E}(Z_T)\leq 8\mathbb{E}\left(\sup_{X\in\mathcal{C}(\eta,\rho,T)}\left|\sum_{(i,j)\in\intervi{n}\times\intervi{p}}\epsilon_{ij}\omega_{ij}X_{ij} \right| \right) =  8\mathbb{E}\left(\sup_{X\in\mathcal{C}(\eta,\rho,T)}\left|\left\langle \Sigma_R, X \right\rangle \right| \right),$$
where $\Sigma_R $ is defined in \eqref{eq:def-SigmaR}. For $X\in \mathcal{C}(\eta,\rho,T)$ we have that $\norm{X}[*]\leq \sqrt{\rho\pi^{-1}T}$. Then by duality between the nuclear and operator norms we obtain
$$ \mathbb{E}(Z_T)\leq 8\mathbb{E}\left(\sup_{\norm{X}[*]\leq \sqrt{\rho\pi^{-1}T}}\left|\left\langle \Sigma_R, X \right\rangle \right| \right) \leq 8\sqrt{\rho\pi^{-1}T}\mathbb{E}\norm{\Sigma_R}.$$
Combined with \eqref{eq:talagr} and using $8\sqrt{\rho\pi^{-1}T}\mathbb{E}\norm{\Sigma_R} \leq \frac{T}{2\times 3}+\frac{3\times 8^2\rho \pi^{-1}}{2}(\mathbb{E}\norm{\Sigma_R})^2$ we finally obtain
\eqref{eq:lemm:ZT} using the definition of $\varsigma$ in \eqref{eq:varsigma}.
\end{proof}
\Cref{lem:klopp2015} implies that $$\mathbb{P}(\mathcal{B})\leq \sum_{l=1}^{+\infty}\mathbb{P}(\mathcal{B}_l)\leq 4\sum_{l=1}^{+\infty} \exp(-\pi\kappa^l\eta/72)\leq 8/(n+p),$$
which concludes the proof.
\end{proof}

\paragraph{Case 1} If $\sum_{(i,j)\in\intervi{n}\times\intervi{p}} \pi_{ij}(\hat X_{ij}-X^*_{ij})^2\leq \eta$, then $\norm{\hat X-X^*}[2]^2\leq \eta/\pi$ and the result of \Cref{th:global-bound-1} \eqref{eq:global-bound} is proved.

\paragraph{Case 2} If $\sum_{(i,j)\in\intervi{n}\times\intervi{p}} \pi_{ij}(\hat X_{ij}-X^*_{ij})^2>\eta$. Let us show that $(\hat X - X^*)/2\gamma\in\mathcal{C}(\eta, 64\operatorname{rank}(X^*))$. Using \eqref{eq:one}, $\sigma_{-}^2\sum_{(i,j)\in\Omega}(\hat X_{ij}-X^*_{ij})^2\geq 0$ and $\norm{\nabla\mathcal{L}(X^*)}\leq \lambda/2$, we obtain that
\begin{equation*}
\norm{\mathcal{P}_{\Theta^*}^{\perp}(\hat\Theta - \Theta^*)}[*] \leq 3\norm{\mathcal{P}_{\Theta^*}(\hat\Theta - \Theta^*)}[*] + \norm{\hat X_0-X_0^*}[*].
\end{equation*}
On the other hand,
\begin{equation*}
\begin{aligned}
&\norm{\hat X-X^*}[*]&&\leq \norm{\mathcal{P}_{\Theta^*}^{\perp}(\hat\Theta - \Theta^*)}[*] +\norm{\mathcal{P}_{\Theta^*}(\hat\Theta - \Theta^*)}[*]+\norm{\hat X_0-X_0^*}[*]\\
& && \leq 4\norm{\mathcal{P}_{\Theta^*}(\hat\Theta - \Theta^*)}[*]+2\norm{\hat X_0-X_0^*}[*]\\
& && \leq 2\sqrt{2\operatorname{rank}(\Theta^*)}\norm{\hat\Theta - \Theta^*}[F] + 2\sqrt{r}\norm{\hat X_0- X_0^*}[F]\\
& && \leq \sqrt{64\operatorname{rank}(X^*)}\norm{\hat X- X^*}[F].
\end{aligned}
\end{equation*}
Thus, \Cref{lemma:rsc} implies that with probability at least $1-8(n+p)^{-1}$, 
\begin{equation}
\label{eq:three}
\sum_{(i,j)\in\intervi{n}\times\intervi{p}}\omega_{ij}(\hat X_{ij} - X^*_{ij})^2\geq \frac{\mathbb{E}[\sum_{(i,j)\in\intervi{n}\times\intervi{p}}\omega_{ij}(\hat X_{ij} - X^*_{ij})^2]}{2}-384\gamma^2\pi^{-1}[64\operatorname{rank}(X^*)(\mathbb{E}\norm{\Sigma_R})^2+8].
\end{equation}
Combining \eqref{eq:three} and \eqref{eq:lambda-bound} we obtain
\begin{equation*}
\frac{\pi\norm{\hat X-X^*}[F]^2}{2} - \frac{384\gamma^2[64\operatorname{rank}(X^*)(\mathbb{E}\norm{\Sigma_R})^2+8]}{\pi} \leq \frac{\lambda}{\sigma_-^2}\left(3\sqrt{2\operatorname{rank}(\Theta^*)}+\sqrt{r}\right)\norm{\hat X-X^*}[F].
\end{equation*}
Finally, using the identity $ab\leq a^2+b^2/4$ and $\operatorname{rank}(X^*)\leq \operatorname{rank}(\Theta^*)+r $ we obtain
\begin{equation}
\label{eq:res1}
\norm{\hat X-X^*}[F]^2 \leq \left(\frac{192\lambda^2}{\pi^2\sigma_{-}^4} + \frac{24576\gamma^2(\mathbb{E}\norm{\Sigma_R})^2}{\pi^2}\right) [\mathrm{rk}(\Theta^*)+r] + \frac{6144}{\pi^2}.
\end{equation}

\subsection{Proof of \Cref{th:global-bound}}
\label{proof:global-bound}
\Cref{th:global-bound} derives from \Cref{th:global-bound-1} and combining the two following steps: 1) computing a value of $\lambda$ such that the condition $\lambda\geq 2\norm{\nabla \mathcal{L}(X^*)}$ holds with high probability and 2) controlling $\mathbb{E}\norm{\Sigma_R}$. Let us start with 1). Define the random matrices $Z_{ij}=\omega_{ij}(-Y_{ij}+\exp(X^*_{ij}))E_{ij}$ and the quantity
\begin{equation}
\label{eq:sigmaZ}
\sigma_Z^2=\max\left(\frac{1}{np}\left\|\sum_{i=1}^n\sum_{j=1}^p\mathbb{E}\left[Z_{ij}Z_{ij}^{\top}\right] \right\|,\quad \frac{1}{np}\left\|\sum_{i=1}^n\sum_{j=1}^p\mathbb{E}\left[Z_{ij}^{\top}Z_{ij}\right] \right\|\right).
\end{equation}
\begin{lemma}
\label{lem:sigmaZ}
Under the assumptions of \Cref{th:global-bound},
\begin{equation}
\frac{\sigma_-^2\beta}{np}\leq \sigma_Z^2\leq \frac{\sigma_+^2\beta}{np}.
\end{equation}
\end{lemma}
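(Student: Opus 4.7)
The plan is to compute both matrix sums appearing in the definition of $\sigma_Z^2$ explicitly and then read off their operator norms. The key observation is that $E_{ij}E_{ij}^\top$ is the $n\times n$ matrix with a single $1$ in position $(i,i)$ and zeros elsewhere, while $E_{ij}^\top E_{ij}$ is the analogous $p\times p$ matrix concentrated at $(j,j)$. Consequently $\sum_{i,j}\mathbb{E}[Z_{ij}Z_{ij}^\top]$ and $\sum_{i,j}\mathbb{E}[Z_{ij}^\top Z_{ij}]$ are both diagonal, so their operator norms reduce to the largest diagonal entry.

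Next I would identify the common scalar weight $\mathbb{E}[\omega_{ij}^2(-Y_{ij}+\exp(X^*_{ij}))^2]$. Since $\omega_{ij}$ is $\{0,1\}$-valued with $\omega_{ij}^2=\omega_{ij}$, and $\omega_{ij}$ is independent of $Y$ under the MCAR assumption, this factorizes as $\pi_{ij}\,\mathbb{E}[(Y_{ij}-\exp(X^*_{ij}))^2]$. The first-order optimality condition in the definition \eqref{eq:x-def} of $X^*$ gives $\exp(X^*_{ij})=\mathbb{E}[Y_{ij}]$, so the inner expectation is exactly $\operatorname{var}(Y_{ij})$. Invoking Assumption \ref{ass:boundedness} then yields the pointwise sandwich
\begin{equation*}
\sigma_-^2\,\pi_{ij} \;\leq\; \mathbb{E}\bigl[\omega_{ij}(-Y_{ij}+\exp(X^*_{ij}))^2\bigr] \;\leq\; \sigma_+^2\,\pi_{ij}.
\end{equation*}

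Summing over $j$ and maximizing over $i$, and symmetrically summing over $i$ and maximizing over $j$, I obtain
\begin{equation*}
\sigma_-^2\max_i \pi_{i.} \leq \Bigl\|\sum_{i,j}\mathbb{E}[Z_{ij}Z_{ij}^\top]\Bigr\| \leq \sigma_+^2\max_i \pi_{i.}, \qquad \sigma_-^2\max_j \pi_{.j} \leq \Bigl\|\sum_{i,j}\mathbb{E}[Z_{ij}^\top Z_{ij}]\Bigr\| \leq \sigma_+^2\max_j \pi_{.j}.
\end{equation*}
Taking the outer maximum, invoking the definition of $\beta$ in \eqref{eq:beta} (understood here as the tight upper bound $\beta=\max(\max_i\pi_{i.},\max_j\pi_{.j})$), and dividing by $np$ delivers both sides of the claim. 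There is no substantive obstacle in the argument: the only care needed is to use independence of $\omega$ and $Y$ together with the Poisson score equation $\exp(X^*_{ij})=\mathbb{E}[Y_{ij}]$ to pin down the diagonal entries exactly, after which the bound is a routine consequence of Assumption \ref{ass:boundedness}.
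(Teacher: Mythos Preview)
Your proposal is correct and follows essentially the same route as the paper's own proof: compute the diagonal structure of $\sum_{i,j}\mathbb{E}[Z_{ij}Z_{ij}^{\top}]$ and $\sum_{i,j}\mathbb{E}[Z_{ij}^{\top}Z_{ij}]$, identify the diagonal entries via independence of $\omega$ and $Y$ together with $\exp(X^*_{ij})=\mathbb{E}[Y_{ij}]$, sandwich using Assumption~\ref{ass:boundedness}, and read off the operator norm as the maximal row/column probability. Your explicit remark that the lower bound requires $\beta$ to be the \emph{tight} value $\max(\max_i\pi_{i.},\max_j\pi_{.j})$ is a useful clarification that the paper leaves implicit.
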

\begin{proof}
For all $(i,j)\in\intervi{n}\times\intervi{p}$, $Z_{ij}Z_{ij}^T = \omega_{ij}(-Y_{ij}+\exp(X_{ij}^*))^2E_{ij}E_{ij}^{\top}$, and $\mathbb{E}[Z_{ij}Z_{ij}^{\top}] = \mathbb{E}[\omega_{ij}]\mathbb{E}[(-Y_{ij}+\exp(X_{ij}^*))^2]E_{ij}E_{ij}^{\top}$, which is a diagonal matrix with $0$ everywhere except on the $i$-th element of its diagonal, where its value is $\mathbb{E}[\omega_{ij}]\mathbb{E}[(-Y_{ij}+\exp(X_{ij}^*))^2]$. Thus,
$$\sum_{(i,j)\in\intervi{n}\times\intervi{p}} \mathbb{E}[Z_{ij}Z_{ij}^{\top}]$$
is also a diagonal matrix, and the $i$-th element of its diagonal is $\sum_{j=1}^p\mathbb{E}[\omega_{ij}]\mathbb{E}[(-Y_{ij}+\exp(X_{ij}^*))^2]$. We obtain that
\begin{equation*}
\frac{1}{np}\left\|\sum_{(i,j)\in\intervi{n}\times\intervi{p}} \mathbb{E}[Z_{ij}Z_{ij}^{\top}] \right\| = \frac{1}{np}\max_{i\in\intervi{n}}\sum_{j=1}^p\mathbb{E}[\omega_{ij}]\mathbb{E}[(-Y_{ij}+\exp(X_{ij}^*))^2].
\end{equation*}
Using $\mathbb{E}[Y_{ij}] =\exp(X_{ij}^*) $ and $\sigma_-^2\leq\text{var}(Y_{ij})\leq \sigma_+^2$, we obtain:
\begin{equation}
\label{eq:sigmaZ-1}
\frac{\sigma_-^2}{np}\max_{i\in\intervi{n}}\sum_{j=1}^p\mathbb{E}[\omega_{ij}]\leq \frac{1}{np}\left\|\sum_{(i,j)\in\intervi{n}\times\intervi{p}} \mathbb{E}[Z_{ij}Z_{ij}^{\top}] \right\| \leq \frac{\sigma_+^2}{np}\max_{i\in\intervi{n}}\sum_{j=1}^p\mathbb{E}[\omega_{ij}].
\end{equation}
Using the same arguments, we also obtain
\begin{equation}
\label{eq:sigmaZ-2}
\frac{\sigma_-^2}{np}\max_{j\in\intervi{p}}\sum_{i=1}^n\mathbb{E}[\omega_{ij}]\leq \frac{1}{np}\left\|\sum_{(i,j)\in\intervi{n}\times\intervi{p}} \mathbb{E}[Z_{ij}^{\top}Z_{ij}] \right\| \leq \frac{\sigma_+^2}{np}\max_{j\in\intervi{p}}\sum_{i=1}^n\mathbb{E}[\omega_{ij}].
\end{equation}
Combining \eqref{eq:sigmaZ-1} and \eqref{eq:sigmaZ-1}, we obtain that 
\begin{multline*}
\frac{\sigma_-^2}{np}\max\left\{\max_{i\in\intervi{n}}\sum_{j=1}^p\mathbb{E}[\omega_{ij}], \max_{j\in\intervi{p}}\sum_{i=1}^n\mathbb{E}[\omega_{ij}]\right\}\leq  \sigma_Z^2\leq\\
 \frac{\sigma_+^2}{np}\max\left\{\max_{i\in\intervi{n}}\sum_{j=1}^p\mathbb{E}[\omega_{ij}], \max_{j\in\intervi{p}}\sum_{i=1}^n\mathbb{E}[\omega_{ij}]\right\},
\end{multline*}
which concludes the proof.
\end{proof}
Note that
$\mathbb{E}\left[Z_{ij}\right]=0$ for all $(i,j)\in\intervi{n}\times\intervi{p}$ and $ \nabla\mathcal{L}(X^*)=\sum_{i=1}^n\sum_{j=1}^pZ_{ij}.$ We use an extension of Theorem 4 in \citet{koltchinskii} to rectangular matrices via self-adjoint dilation (cf., for example, 2.6 in \citet{Tropp2012}). Let $\Xi_1,\ldots,\Xi_m$ be $m$ independent $(n\times p)$-matrices satisfying $\mathbb{E}[\Xi_i] = 0$ and
$$\inf \{K>0:\mathbb{E}[\exp(\norm{\Xi_i}/K)]\leq e\}<M$$
for some constant $M$ and for all $i\in\{1,\ldots,m\}$. Define
$$\sigma^2=\max\left(\frac{1}{m}\left\|\sum_{i=1}^m\mathbb{E}\left(\Xi_{i}\Xi_{i}^T\right) \right\|,\quad \frac{1}{m}\left\|\sum_{i=1}^m\mathbb{E}\left(\Xi_{i}^T\Xi_{i}\right) \right\|\right),$$
and $\bar U = M\log(1+2 \frac{M^2}{\sigma^2})$. Then, for $t \bar U \leq 2(e-1)\sigma^2m$,
$$ \mathbb{P}\left\{\left\|\frac{1}{m}\sum_{i=1}^m\Xi_i\right\| \geq t \right\}\leq 2(n+p)\exp\left\{  -\frac{t^2}{4m\sigma^2+2\bar U t/3}\right\}$$
and for $t\bar U > 2(e-1)\sigma^2m$,
$$ \mathbb{P}\left\{\left\|\frac{1}{m}\sum_{i=1}^m\Xi_i\right\| \geq t \right\}\leq 2(n+p)\exp\left\{  -\frac{t}{(e-1)\bar U }\right\}.$$
Under Assumption~\ref{ass:orlicz} we may apply this result with $m=np$, $(\Xi_1,\ldots,\Xi_m) = (Z_{11},\ldots,Z_{np})$, $M = 2\delta$, $\sigma^2 =\sigma_Z^2$ and $\bar U = 2\delta \log(1+8\delta^2/\sigma_Z^2)$. Taking
$$t\geq \max\left\{2\sigma_Z\sqrt{3np\log(n+p)}, 6\delta(e-1)\log(1+8\delta^2/\sigma_Z^2)\log(n+p)\right\}$$
and using \Cref{lem:sigmaZ}, we get that with probability at least $1-(n+p)^{-1}$,
\begin{equation*}
\norm{\nabla\mathcal{L}(X^*)}\leq \max\left\lbrace 2\sigma_{+}\left(3\beta\log(n+p)\right)^{1/2}, 6\delta(e-1)\log\{1+8\delta^2np/(\beta\sigma_-^2)\}\log(n+p) \right\rbrace.
\end{equation*}
Thus, taking $\lambda$ as in \Cref{th:global-bound} ensures that $\lambda\geq 2\norm{\nabla \mathcal{L}(X^*)}$ with probability at least $1-(n+p)^{-1}$.\\

We now control $\mathbb{E}\norm{\Sigma_R}$ with the following lemma.
\begin{lemma}
\label{lemma:SigmaR}
There exists an absolute constant $C^*$ such that the two following inequality holds
\begin{equation*}
\label{eq:sigmaR-infty}
\mathbb{E}[\norm{\Sigma_R}] \leq C^*\left\{\sqrt{\beta} + \sqrt{\log m}\right\}.
\end{equation*}
\end{lemma}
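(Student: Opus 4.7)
The plan is to exploit the fact that $\Sigma_R$ is a matrix with independent, symmetric, uniformly bounded entries, and then invoke a sharp bound on the operator norm of such random matrices. Write $\xi_{ij} := \epsilon_{ij}\omega_{ij}$, so that $\Sigma_R = \sum_{(i,j)}\xi_{ij}E_{ij}$. The $\xi_{ij}$ are mutually independent, symmetric (because $\epsilon_{ij}$ is Rademacher and independent of $\omega_{ij}$), satisfy $|\xi_{ij}|\leq 1$ almost surely and $\mathbb{E}[\xi_{ij}^2] = \pi_{ij}$. This is exactly the setting of the Bandeira--van Handel / non-commutative Khintchine style operator-norm bound.

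First I would compute the three standard variance parameters. Using \eqref{eq:beta},
\begin{equation*}
\sigma_1 = \max_{i\in\intervi{n}}\sqrt{\sum_{j=1}^p \mathbb{E}[\xi_{ij}^2]} = \max_i \sqrt{\pi_{i.}} \leq \sqrt{\beta},\qquad
\sigma_2 = \max_{j\in\intervi{p}}\sqrt{\sum_{i=1}^n \mathbb{E}[\xi_{ij}^2]} \leq \sqrt{\beta},
\end{equation*}
and $\sigma_* = \max_{ij}\|\xi_{ij}\|_\infty \leq 1$. The sharp operator-norm inequality for rectangular matrices with bounded independent entries then yields
\begin{equation*}
\mathbb{E}\|\Sigma_R\| \leq C\bigl(\sigma_1 + \sigma_2\bigr) + C'\sigma_*\sqrt{\log\min(n,p)},
\end{equation*}
which is exactly the claimed bound with $m = \min(n,p)$ and an appropriate choice of $C^*$.

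As a self-contained alternative that avoids citing a black-box result, the same bound can be obtained by combining Seginer's inequality with a Bernstein step. Seginer gives
\begin{equation*}
\mathbb{E}\|\Sigma_R\| \leq C\,\mathbb{E}\Bigl[\max_{i\in\intervi{n}}\sqrt{\textstyle\sum_j \xi_{ij}^2}\Bigr] + C\,\mathbb{E}\Bigl[\max_{j\in\intervi{p}}\sqrt{\textstyle\sum_i \xi_{ij}^2}\Bigr].
\end{equation*}
Since $\xi_{ij}^2 = \omega_{ij}$, the inner row sum is $\sum_j \omega_{ij}$, a sum of independent Bernoullis with mean $\pi_{i.}\leq\beta$. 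Bernstein's inequality yields $\sum_j \omega_{ij} \leq 2\beta + C\log(n+p)$ with probability at least $1 - (n+p)^{-2}$, and a union bound over rows upgrades this to a uniform bound on $\max_i \sum_j \omega_{ij}$. Taking square roots and combining with the symmetric estimate for columns produces the $\sqrt{\beta} + \sqrt{\log m}$ scaling.

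The main obstacle is purely a matter of choosing an operator-norm inequality that does \emph{not} introduce a multiplicative $\sqrt{\log m}$ in front of the variance term $\sqrt{\beta}$ (a direct application of matrix Bernstein would only yield $\sqrt{\beta\log(n+p)} + \log(n+p)$, which is weaker). All other steps are routine: computing row and column variances from \eqref{eq:beta}, and using the boundedness $|\xi_{ij}|\leq 1$. No new assumptions on $Y$ or $\Omega$ beyond the MCAR condition are required, since $\Sigma_R$ depends on $\Omega$ and the auxiliary Rademacher sequence only.
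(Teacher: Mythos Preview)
Your primary argument is exactly the paper's: apply the Bandeira--van Handel bound (Corollary 3.3 of \citet{bandeira2016}, extended to rectangular matrices) to $\Sigma_R$, with $\sigma_1\vee\sigma_2\leq\sqrt{\beta}$ from \eqref{eq:beta} and $\sigma_*\leq 1$. Your Seginer-based alternative is a nice self-contained addendum the paper does not pursue, but the core proof matches.
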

\begin{proof}
We use an extension to rectangular matrices via self-adjoint dilation of Corollary 3.3 in \citet{bandeira2016}.
\begin{proposition}
\label{prop:sigmaR-nuc}
Let $A$ be an $n\times p$ rectangular matrix with $A_{ij}$ independent centered bounded random variables. then, there exists a universal constant $C^*$ such that
$$\mathbb{E}[{\norm{A}}] \leq C^*\left\{\sigma_1\vee \sigma_2 + \sigma_*\sqrt{\log(n\wedge p)}\right\},$$
\begin{equation*}
\sigma_1 = \max_i\sqrt{\sum_j\mathbb{E}[{A_{ij}^2}]},\quad
\sigma_2 = \max_j\sqrt{\sum_i\mathbb{E}[{A_{ij}^2}]},\quad
\sigma_* = \max_{i,j}|A_{ij}|.
\end{equation*}
\end{proposition}
Applying \Cref{prop:sigmaR-nuc} to $\Sigma_R$ with $\sigma_1 \vee \sigma_2 \leq \sqrt{\beta}/|\Omega|$ and $\sigma_*\leq 1$ we obtain
$$\mathbb{E}[{\norm{\Sigma_R}}] \leq C^*\left\{\sqrt{\beta} + \sqrt{\log(n\wedge p)}\right\}.$$
\end{proof}
Combining 1) and 2) with \eqref{eq:res1} and a union bound argument, we obtain the result of \Cref{th:global-bound}.

\subsection{Proof of \Cref{prop:null-lambda}}
\label{lambda-proof}
In what follows we denote for $X_0\in\mathcal{X}_0$ and $\Theta\in\mathcal{T}$
$\mathcal{F}^{\lambda}(X_0,\Theta) = \mathcal{L}(X_0+\Theta) + \lambda\norm{\Theta}[*].$
We establish below that $\lambda_0(Y)$ defined in \eqref{eq:null-lambda} is equal to
$$\lambda_0(Y) = \underset{\lambda}{\min} \quad 0\in \partial_{\Theta}\lbrace\mathcal{F}^{\lambda}(\hat X_0, \Theta) + \chi_{\mathcal{T}}(\Theta)\rbrace\mid_{\Theta=0},$$
where for  $\mathcal{K}\subset\mathbb{R}^{n\times p}$ ,$\chi_{\mathcal{K}}(X)$ is the characteristic function of the set $\mathcal{K}$, equal to $0$ on $\mathcal{K}$ and $+\infty$ elsewhere,
and $\hat{X}_0=\underset{X \in \mathcal{X}_0}{\operatorname{arg min}} \mathcal{L}(X)$ (see \eqref{eq:null-lambda}). The subdifferential of the objective function $\mathcal{F}^{\lambda}$ with respect to $\Theta$ is given by
$$\partial_{\Theta}\mathcal{F}^{\lambda}(\hat X_0,0) =\nabla\mathcal{L}(\hat X_0+\Theta)\mid_{\Theta=0}+\lambda \partial_{\Theta}\left\|\Theta \right\|_{*}\mid_{\Theta=0} +\partial_{\Theta}\chi_{\mathcal{T}}(\Theta)\mid_{\Theta=0}.$$
 $0\in\partial_{\Theta}\chi_{\mathcal{T}}(\Theta)\mid_{\Theta=0}$. \Cref{lemma:nuc-subdiff} ensures that $0\in \partial\mathcal{F}^{\lambda}(\Theta)\mid_{\Theta=0}$ if and only if $$0 \in \left\{\nabla\mathcal{L}(\hat X_0)+ \lambda W\text{; } \left\|\mathcal{P}_{\mathcal{T}}(W)\right\|\leq 1\right\}.$$ This is equivalent to $\lambda \geq \left\|\mathcal{P}_{\mathcal{T}}(\nabla\mathcal{L}(\hat{X}_0))\right\|.$ Additionally, at the optimum $\hat{X}_0$, we have $\mathcal{P}_{\mathcal{T}}(\nabla\mathcal{L}(\hat{X}_0)) = \nabla\mathcal{L}(\hat{X}_0)$, which concludes the proof.\\

\begin{lemma}
\label{lemma:nuc-subdiff}
Let $g:\mathcal{T} \rightarrow \mathbb{R}_+$ be the function defined by $g(A)=\left\|A\right\|_{*}$ for $A\in \mathcal{T}$.
$\partial g(0)= \left\{W\in \mathbb{R}^{n\times p}, \left\|\mathcal{P}_{\mathcal{T}}(W)\right\|\leq 1 \right\}$.
\end{lemma}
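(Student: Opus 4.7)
The plan is to compute $\partial g(0)$ by treating $g$ as the proper convex function on the ambient space $\mathbb{R}^{n\times p}$ given by $g(A)=\|A\|_*+\chi_{\mathcal{T}}(A)$, where $\chi_{\mathcal{T}}$ is the indicator of the subspace $\mathcal{T}$ (equal to $0$ on $\mathcal{T}$ and $+\infty$ elsewhere). Because the nuclear norm is finite and continuous on all of $\mathbb{R}^{n\times p}$, the Moreau--Rockafellar sum rule applies without further qualification and yields
\[
\partial g(0)=\partial\|\cdot\|_*(0)+\partial\chi_{\mathcal{T}}(0).
\]

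Next I would invoke the two standard identifications of these subdifferentials. By duality between the nuclear and operator norms, $\partial\|\cdot\|_*(0)=\{U\in\mathbb{R}^{n\times p}:\|U\|\leq 1\}$. Since $\mathcal{T}$ is a linear subspace, its normal cone at $0$ is its orthogonal complement, so $\partial\chi_{\mathcal{T}}(0)=\mathcal{T}^{\perp}$. Combining, $\partial g(0)=\{U+V:\|U\|\leq 1,\ V\in\mathcal{T}^{\perp}\}$.

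The remaining step is to identify this Minkowski sum with $\{W:\|\mathcal{P}_{\mathcal{T}}(W)\|\leq 1\}$ using the orthogonal decomposition $W=\mathcal{P}_{\mathcal{T}}(W)+\mathcal{P}(W)$, noting that $\mathcal{P}(W)\in\mathcal{T}^{\perp}$ by the very definition of $\mathcal{T}$ and $\mathcal{P}$. For the forward inclusion, if $W=U+V$ with $\|U\|\leq 1$ and $V\in\mathcal{T}^{\perp}$, then $\mathcal{P}_{\mathcal{T}}(W)=\mathcal{P}_{\mathcal{T}}(U)=P_1UP_2^{\top}$, and submultiplicativity of the operator norm together with $\|P_1\|,\|P_2\|\leq 1$ gives $\|\mathcal{P}_{\mathcal{T}}(W)\|\leq\|U\|\leq 1$. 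For the reverse inclusion, the very decomposition $W=\mathcal{P}_{\mathcal{T}}(W)+\mathcal{P}(W)$ exhibits $W$ as a sum of an element of the operator-norm unit ball (by the hypothesis $\|\mathcal{P}_{\mathcal{T}}(W)\|\leq 1$) and an element of $\mathcal{T}^{\perp}$.

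There is no serious obstacle; the only point requiring care is the semantics of $\partial g(0)$. Since $g$ is originally defined on $\mathcal{T}$, writing $\partial g(0)\subset\mathbb{R}^{n\times p}$ implicitly relies on the indicator-function extension above. This is precisely the interpretation needed in the surrounding proof of \Cref{prop:null-lambda}, where the subdifferential appears as $\partial_{\Theta}\|\Theta\|_*\mid_{\Theta=0}+\partial_{\Theta}\chi_{\mathcal{T}}(\Theta)\mid_{\Theta=0}$, so the lemma precisely delivers the term needed to match the first-order optimality condition for the constrained problem.
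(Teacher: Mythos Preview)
Your argument is correct and takes a genuinely different route from the paper. The paper works directly from the definition of the subdifferential: for the inclusion $\{\|\mathcal{P}_{\mathcal{T}}(W)\|\leq 1\}\subset\partial g(0)$ it uses that $B\in\mathcal{T}$ implies $\langle W,B\rangle=\langle\mathcal{P}_{\mathcal{T}}(W),B\rangle\leq\|B\|_*$ by duality; for the reverse inclusion it assumes $\|\mathcal{P}_{\mathcal{T}}(W)\|>1$, takes the SVD $\mathcal{P}_{\mathcal{T}}(W)=U\Sigma V^{\top}$, and exhibits the rank-one matrix $B=u_1v_1^{\top}\in\mathcal{T}$ as an explicit witness that $\langle W,B\rangle>\|B\|_*$. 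Your approach instead invokes the Moreau--Rockafellar sum rule to write $\partial g(0)=\{U:\|U\|\leq 1\}+\mathcal{T}^{\perp}$ and then identifies this Minkowski sum with the claimed set via the orthogonal decomposition and submultiplicativity of the operator norm. The paper's argument is more self-contained (no external convex-analysis lemma needed), while yours is more systematic and makes the structure transparent: it shows the result is really just the standard formula for the subdifferential of a norm restricted to a subspace, and it dovetails exactly with how the subdifferential is split in the surrounding proof of \Cref{prop:null-lambda}.
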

\begin{proof}
By definition of the subdifferential we need to prove that for all $W\in \mathbb{R}^{n\times p}$, $\left\|\mathcal{P}_{\mathcal{T}}(W)\right\|<1$, and for all $B\in \mathcal{T}$, $g(B)\geq g(0) + \left\langle W, B-0  \right\rangle$. First $B\in \mathcal{T}$ implies $\langle W,B\rangle = \langle \mathcal{P}_{\mathcal{T}}(W),B\rangle$, therefore $\left\|\mathcal{P}_{\mathcal{T}}(W)\right\|\leq 1$ is a sufficient condition for $W\in \partial g(0)$. Now assume $\left\|\mathcal{P}_{\mathcal{T}}(W) \right\|>1$ and let $\mathcal{P}_{\mathcal{T}}(W)=U\Sigma V^T$, where $U$ and $V$ are orthogonal matrices of left and right singular vectors, and $\Sigma_{11} = \left\|\mathcal{P}_{\mathcal{T}}(W) \right\|>1$. Let us define $B=U\tilde{\Sigma}V^T$, $\tilde{\Sigma}_{11}=1$ and $\tilde{\Sigma}_{ij}=0$ elsewhere; note that with this definition $B\in\mathcal{T}$. We have $g(B)=1$ and $\langle \mathcal{P}_{\mathcal{T}}(W), B\rangle = \Sigma_{11}>g(B)$. Therefore $\left\|\mathcal{P}_{\mathcal{T}}(W) \right\| > 1\Rightarrow W\notin \partial g(0)$, from which we conclude
$$\partial g(0)= \left\{W\in \mathbb{R}^{n\times p}, \left\|\mathcal{P}_{\mathcal{T}}(W)\right\|<1 \right\}.$$
\end{proof}



\bibliographystyle{plainnat}
\bibliography{references}

\end{document}